\def\baselinestretch{1.1}
\newtheorem{thm}{Theorem}[section]
\newtheorem{dfn}[thm]{Definition}
\newtheorem{prop}[thm]{Proposition}
\newtheorem{lem}[thm]{Lemma}
\newtheorem{exmpl}[thm]{Example}
\newtheorem{obs}[thm]{Remark}
\def\beq{\begin{equation}}
\def\eeq{\end{equation}}
\def\bea{\begin{eqnarray}}
\def\eea{\end{eqnarray}}
\def\beann{\begin{eqnarray*}}
\def\eeann{\end{eqnarray*}}
\def\ben{\begin{enumerate}}
\def\een{\end{enumerate}}
\def\bit{\begin{itemize}}
\def\eit{\end{itemize}}
\def\derpar#1#2{\frac{\partial{#1}}{\partial{#2}}}
\newcommand\restr[2]{{
  \left.\kern-\nulldelimiterspace 
  #1 
  \right|_{#2} 
}}
\newcommand{\R}{\mathbb{R}}
\renewcommand{\d}{\mathrm{d}}
\newcommand{\diff}{\mathrm{d}}
\renewcommand{\L}{\mathcal{L}}
\renewcommand{\H}{\mathcal{H}}
\newcommand{\vf}{\mathfrak{X}}
\newcommand{\df}{\Omega}
\newcommand{\Cinfty}{\mathscr{C}^\infty}
\newcommand{\Tan}{\mathrm{T}}
\newcommand{\inn}{i}
\newcommand{\Lie}{\mathscr{L}}
\newcommand{\X}{\mathfrak{X}}
\newcommand{\Reeb}{\mathcal{R}}
\def\d{\mathrm{d}}
\let\ds\displaystyle
\title{\sc
A $k$-contact Lagrangian formulation\\
for nonconservative field theories}
\author{\sffamily 
$^a$Jordi Gaset, 
$^b$Xavier Gr\`acia, 
$^b$Miguel C. Mu\~noz-Lecanda,
$^b$Xavier Rivas and 
$^b$Narciso Rom\'an-Roy%
\thanks{emails: 
jordi.gaset@uab.cat,
xavier.gracia@upc.edu,
miguel.carlos.munoz@upc.edu,
xavier.rivas@upc.edu,
narciso.roman@upc.edu}
\\[1ex]
\normalsize\itshape\sffamily 
$^a$Department of Physics,
Universitat Aut\`onoma de Barcelona,
Bellaterra, Catalonia, Spain
\\[0.1ex]
\normalsize\itshape\sffamily 
$^b$Department of Mathematics,
Universitat Polit\`ecnica de Catalunya,
Barcelona, Catalonia, Spain
}
\date{\sffamily  February 23, 2020}
\begin{document}

\maketitle

\begin{abstract}

We present a geometric Lagrangian formulation for
first-order field theories with dissipation. This formulation is based on the $k$-contact geometry introduced in a previous paper, and gathers contact Lagrangian mechanics with $k$-symplectic Lagrangian field theory together.
We also study the symmetries and dissipation laws for these nonconservative theories,
and analyze some examples.
\end{abstract}

\noindent\textbf{Keywords:}
contact structure, field theory, Lagrangian system, 
dissipation, $k$-symplectic structure, $k$-contact structure.

\noindent\textbf{MSC\,2020 codes:}
70S05, 
70S10,
70G45,
53C15,
53D10,
35R01



\medskip
\setcounter{tocdepth}{2}
{
\def\baselinestretch{0.97}
\small
\def\addvspace#1{\vskip 1pt}
\parskip 0pt plus 0.1mm
\tableofcontents
}

\newpage

\section{Introduction}

In the last years the methods of differential geometry
have been used to develop an intrinsic framework
to describe dissipative or damped systems,
in particular using contact geometry
\cite{Banyaga2016,Geiges2008,Kholodenko2013}.
It has been applied to give both the
Hamiltonian and the Lagrangian descriptions of mechanical systems with dissipation
\cite{Bravetti2017,BCT-2017,CG-2019,CCM-2018,DeLeon2019,
DeLeon2016b,GGMRR-2019b,Lainz2018,LIU2018}.
Contact geometry has other physical applications, as for instance
thermodynamics, quantum mechanics, circuit theory, control theory, etc
(see \cite{Bravetti2018,CCM-2018,Goto-2016,Kholodenko2013,RMS-2017}, among others).
All of them are described by ordinary differential equations
to which some terms that account for the dissipation or damping have been added.

These geometric methods have been also used to give 
intrinsic descriptions of the Lagrangian and Hamiltonian formalisms of field theory;
in particular, those of  multisymplectic and $k$-symplectic geometry
(see, for instance, 
\cite{Carinena1991,DeLeon2015,EMR-96,GIMMSY-mm,Rey2004,Roman2009} 
and references therein).
Nevertheless, 
all these methods are developed, in general, to model systems
of variational type; that is, without dissipation or damping.

In a recent paper \cite{GGMRR-2020}
we have introduced a generalization of both 
contact geometry and $k$-symplectic geometry 
to describe field theories with dissipation,
and more specifically their Hamiltonian
(De Donder--Weyl) covariant formulation.
This new formalism is inspired by contact Hamiltonian mechanics, 
where the addition of a ``contact variable''~$s$ 
allows to describe dissipation terms;
geometrically this new variable comes from a contact form
instead of the usual symplectic form of Hamiltonian mechanics.
In the field theory case, 
if $k$ is the number of independent variables
(usually space-time variables),
we add $k$ new dependent variables $s^\alpha$ 
to introduce dissipation terms 
in the De Donder--Weyl equations.
These new variables can be obtained geometrically from the notion of 
\emph{$k$-contact structure}: 
a family of $k$ differential 1-forms $\eta^\alpha$ 
satisfying certain properties. 
Then a \emph{$k$-contact Hamiltonian system}
is a manifold endowed with a $k$-contact structure and a Hamiltonian function~$\H$.
With these elements we can state the $k$-contact Hamilton equations, 
which indeed add dissipation terms to the usual Hamiltonian field equations.
The study of their symmetries also allows to obtain some dissipation laws.
This formalism was applied to two relevant examples:
the damped vibrating string and Burgers' equation.

The aim of this paper is to extend the above study, 
developing the Lagrangian formalism of field theories with dissipation, mainly in the regular case.
For this purpose, the aforementioned $k$-contact structure 
will be used to generalize the Lagrangian formalism of the contact mechanics
presented in \cite{DeLeon2019,GGMRR-2019b}
and the Lagrangian $k$-symplectic formulation of classical field theories
\cite{DeLeon2015,Rey2004}.
In this new formalism the phase bundle
is $\oplus^k \Tan Q\times\R^k =(\Tan Q \oplus \stackrel{k}{\dots} \oplus \Tan Q)\times\R^k$.
Then, given a Lagrangian function
${\cal L}\colon\oplus^k \Tan Q\times\R^k\to\R$, 
one defines $k$ differential 1-forms $\eta^\alpha_{\cal L}$ which,
when ${\cal L}$ is {\sl regular}, constitute
a $k$-contact structure on the phase bundle.
The $k$-contact Lagrangian field equations
are then defined as the $k$-contact Hamiltonian field equations for the Lagrangian energy $E_\L$.
When written in coordinates they are the Euler--Lagrange  equations for $\L$
with some additional terms which account for dissipation.

We also study several types of symmetries for these Lagrangian field theories, as well as
their associated dissipation laws, which are characteristic of dissipative systems, 
and are the analogous to the conservation laws for conservative systems.

As examples of this formalism we study the construction of a $k$-contact Lagrangian formulation for
a class of second-order elliptic and hyperbolic partial differential equations,
and we exemplify this procedure with the equation of the
damped vibrating membrane.
In another example 
we illustrate the difference between the linear terms that appear in the equations arising from magnetic-like terms 
and those coming from a $k$-contact formulation.

The paper is organized as follows.
Section \ref{prel} is devoted to briefly review several preliminary concepts on
$k$-symplectic manifolds, $k$-contact geometry and 
$k$-contact Hamiltonian systems for field theories with dissipation.
In Section \ref{klagft} we introduce the notion of $k$-contact Lagrangian system,
and set the geometric framework for the Lagrangian formalism of field theories 
with dissipation, stating the geometric form of the 
contact Euler--Lagrange equations in several equivalent ways,
as well as the Legendre transformation and the associated canonical Hamiltonian formalism.
In Section \ref{symms} we study several types
of Lagrangian symmetries and the relations between them, as well as the corresponding dissipation laws.
Finally, some examples are given in
Section~\ref{examples}.

Throughout the paper all the manifolds and mappings are assumed to be smooth. 
Sum over crossed repeated indices is understood.

\section{Preliminaries}
\label{prel}

\subsection{$k$-tangent bundle, $k$-vector fields and geometric structures}
\label{1sect}

(See \cite{DeLeon2015,Rey2004} for more details).

Let $Q$ be a manifold and consider $\oplus^k \Tan Q = \Tan Q \oplus \stackrel{k}{\dots} \oplus \Tan Q$
(it is called the \textbf {$k$-tangent bundle} or {\sl bundle of $k^1$-velocities} of $Q$),
which is endowed with the natural projections to each direct summand and to the base manifold:
$$
\tau_\alpha\colon \oplus^k \Tan Q \rightarrow \Tan Q
\quad , \quad \tau^1_Q\colon \oplus^k \Tan Q \to Q \ .
$$
A point of $\oplus^k \Tan Q $ is ${\bf w}_q= (v_{1q},\ldots,v_{kq}) \in\oplus^k \Tan Q$,
where $(v_i)_q\in\Tan_qQ$.

A \textbf{$k$-vector field} on $Q$ is a section
${\bf X} \colon Q \longrightarrow \oplus^k \Tan Q$ of the projection~$\tau_Q^1$.
It is specified by giving 
$k$ vector fields $X_{1}, \dots, X_{k}\in\vf(Q)$, obtained as
$X_\alpha=\tau_\alpha\circ{\bf X}$; for $1\leq\alpha\leq k$, and it is denoted ${\bf X}=(X_1, \ldots, X_k)$.

Given a map $\phi\colon D\subset\R^k \rightarrow Q$,
the \textbf{first prolongation} of $\phi$ to $\oplus^k \Tan Q$
is the map $\phi' \colon D\subset\R^k \to \oplus^k \Tan Q$ defined by
$$
\phi'(t) =
\left(\phi(t),\Tan \phi \left(\frac{\partial}{\partial t^1}\Big\vert_{t}\right),
\ldots,
\Tan\phi\left(\frac{\partial}{\partial t^k}\Big\vert_{t}\right)\right)
\equiv (\phi(t);\phi_\alpha'(t)) \,,
$$
where $t = (t^1,\ldots,t^k)$ are the canonical coordinates of $\R^k$.
A map $\varphi\colon D\subset\R^k\to \oplus^k\Tan Q$
is said to be {\bf holonomic} if it is the first prolongation of a map
$\phi\colon D\subset\R^k \rightarrow Q$.

A map
$\phi \colon D\subset\R^k \rightarrow Q$ is an \textbf{integral map} of a $k$-vector field
${\bf X}=(X_{1},\dots, X_{k})$ when
\beq
\phi' = {\bf X} \circ \phi
\,.
\label{integsec}
\eeq
Equivalently,
$\ds\Tan \phi \circ \frac{\partial}{\partial t^\alpha}=X_\alpha \circ \phi$,
for every~$\alpha$.
A $k$-vector field ${\bf X}$ is \textbf{integrable} if
every point of~$Q$ is in the image of an integral map of~${\bf X}$.

In coordinates, if
$\displaystyle X_\alpha= X_\alpha^i \frac{\partial}{\partial x^i}$,
then $\phi$ is an integral map of $\mathbf{X}$ if, and only if,
it is a solution to the following system of partial differential equations:
$$
\frac{\partial \phi^i}{\partial t^\alpha} = X_\alpha^i(\phi) \ .
$$
A $k$-vector field ${\bf X}=(X_1, \ldots, X_k)$ is integrable
if, and only if, $[X_\alpha,X_\beta] = 0$, for every $\alpha,\beta$ \cite{Lee2013};
these are the necessary and sufficient conditions for the integrability of the above system of partial differential equations.

As in the case of the tangent bundle,
local coordinates $(q^i)$ in $U \subset Q$
induce natural coordinates $(q^i ,v_\alpha^i)$ in $(\tau^1_Q)^{-1}(U)\subset\oplus^k \Tan Q$,
with $1\leq i\leq n$ and $1\leq\alpha\leq k$.

Given  $\alpha$ and ${\bf w}_q\in \oplus^k \Tan Q$, there exists a natural map
$(\Lambda_q^{{\bf w}_q})^\alpha\colon\Tan_qQ\to\Tan_{{\bf w}_q}( \oplus^k \Tan Q)$,
called the {\bf $\alpha$-vertical lift} from $q$ to ${\bf w}_q$,
defined as
  $$
(\Lambda_q^{{\bf w}_q})^\alpha(u_q)= \displaystyle\frac{d}{d\lambda} (
{v_1}_q,\ldots,{v_{\alpha-1}}_q,{v_\alpha}_q+\lambda u_q,{v_{\alpha+1}}_q,
\ldots,{v_k}_q)_{\vert_{\lambda=0}} \ .
$$ 
In coordinates, if $\ds u_q = a^i \displaystyle\frac{\partial}{\partial q^i}\Big\vert_q$, we have
$\ds (\Lambda_q^{{\bf w}_q})^\alpha(u_q)= a^i \displaystyle\frac{\partial}{\partial
v^i_\alpha}\Big\vert_{w_q}$.
Observe that these $\alpha$-vertical lifts are $\tau^1_Q$-vertical vectors.
These vertical lifts extend to vector fields in a natural way; that is,
if $X\in\vf(Q)$, then its $\alpha$-vertical lift, $\Lambda^\alpha(X)\in\vf(\oplus^k \Tan Q)$,
is given by $(\Lambda^\alpha(X))_{{\bf w}_q}:=(\Lambda_q^{{\bf w}_q})^\alpha(X_q)$.

The {\bf canonical $k$-tangent structure} on $\oplus^k \Tan Q$ is the set
$(J^1,\ldots,J^k)$ of tensor fields of type $(1,1)$ in $\oplus^k \Tan Q$ defined as
$$
J^\alpha_{{\bf w}_q}:=(\Lambda_q^{{\bf w}_q})^\alpha\circ\Tan_{{\bf w}_q}\tau^1_Q \ .
$$
In natural coordinates we have
$\displaystyle J^\alpha=\frac{\partial}{\partial v^i_\alpha} \otimes \d q^i$.

The {\bf Liouville vector field}
$\Delta\in\vf(\oplus^k \Tan Q)$  is the infinitesimal generator of the flow
$\psi\colon\R\times \oplus^k \Tan Q\longrightarrow \oplus^k \Tan Q$, given by
$\psi(t;v_{1q},\ldots,v_{kq}) = (e^t v_{1q},\ldots,e^t v_{kq})$.
Observe that $\Delta=\Delta_1+\ldots+\Delta_k$, where each $\Delta_\alpha\in\vf(\oplus^k \Tan Q)$ is the infinitesimal generator of the flow $\psi^\alpha\colon \R \times \oplus^k \Tan Q \longrightarrow \oplus^k \Tan Q$
$$
\psi^\alpha(s;v_{1q},\ldots,v_{kq})=(v_{1q},\ldots,v_{(\alpha-1)q}, e^t v_{\alpha q},v_{(\alpha+1)q},\ldots,v_{kq}) \ .
$$
In coordinates, 
$\displaystyle\Delta = v^i_\alpha\derpar{}{v_\alpha^i}$.

Given a map $\Phi\colon M\to N$, there exists a natural extension
$\oplus^k\Tan\Phi\colon\oplus^k\Tan M\to\oplus^k\Tan N$, defined by
$$
\oplus^k\Tan\Phi(v_{1q},\ldots,v_{kq}):=(\Tan_q\Phi(v_{1q}),\ldots,\Tan_q\Phi(v_{kq})) \ .
$$
By definition, a $k$-vector field $\mathbf{\Gamma}=(\Gamma_1,\ldots,\Gamma_k)$ in $\oplus^k \Tan Q$
is a section of the projection
$$
\tau^1_{\oplus^k \Tan Q}\colon\Tan(\oplus^k \Tan Q)\oplus \stackrel{k}{\dots}\oplus\Tan(\oplus^k \Tan Q) \rightarrow\oplus^k \Tan Q \ .
$$
Then, we say that $\mathbf{\Gamma}$ is a
\textbf{second order partial differential equation ({\sc sopde})} 
if it is also a section of the projection
$$
\oplus^k\Tan\tau^1_Q \colon\Tan(\oplus^k \Tan Q)\oplus \stackrel{k}{\dots}\oplus\Tan(\oplus^k \Tan Q) \rightarrow\oplus^k \Tan Q \ ;
$$
that is,
$\oplus^k\Tan\tau^1_Q\circ\mathbf{\Gamma}={\rm Id}_{\oplus^k \Tan Q}=
\tau^1_{\oplus^k \Tan Q}\circ\mathbf{\Gamma}$.
Notice that a $k$-vector field
$\mathbf{\Gamma}$ in $\oplus^k \Tan Q$ is a {\sc sopde}
if, and only if,
$J^\alpha(\Gamma_\alpha)=\Delta$.

In addition, an integrable $k$-vector field $\mathbf{\Gamma}=(\Gamma_1,\ldots,\Gamma_k)$ in
$\oplus^k \Tan Q$ is a {\sc sopde}
if, and only if, its integrable maps are holonomic.

In natural coordinates, the expression of the components of a {\sc sopde} is
$\displaystyle \Gamma_\alpha=
v^i_\alpha\frac{\partial} {\partial q^i}+
\Gamma_{\alpha\beta}^i \frac{\partial} {\partial v^i_\beta}$.
Then, if $\psi\colon\R^k \to\oplus^k \Tan Q$, locally given by
$\psi(t)=(\psi^i(t),\psi^i_\beta(t))$, is an integral map of
an integrable {\sc sopde},
from \eqref{integsec} we have that
$$
\frac{\partial\psi^i} {\partial t^\alpha}\Big\vert_t=\psi^i_\alpha(t) 
\quad , \quad
\frac{\partial\psi^i_\beta} {\partial t^\alpha}\Big\vert_t=\Gamma_{\alpha\beta}^i(\psi(t))
\, .
$$
Furthermore, $\psi=\phi'$, where $\phi'$ is the first prolongation of the map
$\phi=\tau\circ\psi \colon
\R^k \stackrel{\psi}{\to}\oplus^k \Tan Q \stackrel{\tau}{\to} Q$,
and hence $\phi$ is a solution to the system of second order partial
differential equations
\beq
\label{nn1}
\frac{\partial^2 \phi^i}{\partial t^\alpha\partial t^\beta}(t)=
\Gamma_{\alpha\beta}^i\left(\phi^i(t),\frac{\partial\phi^i}{\partial
t^\gamma}(t)\right) \ .
\eeq
Observe that, from \eqref{nn1} we obtain that, if $\mathbf{\Gamma}$ is an
integrable {\sc sopde}, then 
$\Gamma_{\alpha\beta}^i=\Gamma_{\beta\alpha}^i$.

\subsection{\texorpdfstring{$k$}--symplectic manifolds}

(See 
\cite{Awane1992,DeLeon1988,DeLeon1988a,DeLeon2015,Rey2004} 
for more details.)

Let $M$ be a manifold of dimension $N=n+kn$.
A \textbf{$k$-symplectic structure} on~$M$
is a family $(\omega^1,\ldots,\omega^k;V)$,
where $\omega^\alpha$ ($\alpha=1,\ldots,k$) are closed $2$-forms,
and $V$ is an integrable $nk$-dimensional tangent distribution on~$M$
such that
$$
(i) \ \omega^\alpha \vert_{V\times V} =0 \ \hbox{\rm (for every $\alpha$)}
\:, \quad
(ii)\bigcap_{\alpha=1}^{k} \ker\omega^\alpha = \{0\} 
\:.
$$
Then $(M,\omega^\alpha,V)$ is called a \textbf{$k$-symplectic manifold}.

For every point of $M$ there exist a neighbourhood $U$
and local coordinates
$(q^i , p^\alpha_i)$ ($1\leq i\leq n$, $1\leq \alpha\leq k$)
such that, on~$U$,
$$
\omega^\alpha=  \d q^i\wedge \d p^\alpha_i 
\:,\quad
V =
\left\langle \frac{\partial}{\partial p^1_i}, \dots,
\frac{\partial}{\partial p^k_i} \right\rangle 
\,.
$$
These are the so-called \emph{Darboux} or \emph{canonical coordinates}
of the $k$-symplectic manifold \cite{Awane1992}.

The canonical model for $k$-symplectic manifolds is
$\oplus^k \Tan^*Q= \Tan^*Q\oplus \stackrel{k}{\dots} \oplus \Tan^*Q$,
with natural projections
$$
\pi^\alpha\colon \oplus^k \Tan^*Q \rightarrow\Tan^*Q
\:, \quad 
\pi^1_Q \colon \oplus^k \Tan^*Q \to Q 
\:.
$$
As in the case of the cotangent bundle,
local coordinates $(q^i)$ in $U \subset Q$ induce natural coordinates $(q^i ,p^\alpha_i)$ in
$(\pi^1_Q)^{-1}(U)$.
If $\theta$ and $\omega=-\d\theta$ are the canonical forms of $\Tan^*Q$,
then $\oplus^k \Tan^*Q$ is endowed with the canonical forms
\beq
\theta^\alpha=(\pi^\alpha)^*\theta 
\,,\quad
\omega^\alpha=(\pi^\alpha)^*\omega=-(\pi^\alpha)^*\d\theta=-\d\theta^\alpha 
,
\label{cartforms}
\eeq
and in natural coordinates we have that 
$\theta^\alpha = p^\alpha_i\d q^i$ 
and
$\omega^\alpha=\d q^i\wedge\d p^\alpha_i$.
Thus, the triple  
$(\oplus^k \Tan^*Q,\omega^\alpha, V)$,
where $V=\ker \Tan \pi^1_Q$,
is a $k$-symplectic manifold,
and the natural coordinates in $\oplus^k \Tan^*Q$ are Darboux coordinates.

\subsection{\texorpdfstring{$k$}--contact structures}
\label{kcontact}

The definition of $k$-contact structure has been recently introduced in
\cite{GGMRR-2020},
where the reader can find more details.

Remember that,
if $M$ is a smoooth manifold of dimension~$m$,
a (generalized) distribution on~$M$ 
is a subset $D \subset \Tan M$
such that, for every $x \in M$,
$D_x \subset \Tan_xM$ is a vector subspace.
The distribution $D$ is smooth when it can be locally spanned
by a family of smooth vector fields, and
is regular when it is smooth
and has locally constant rank.
A codistribution on $M$ is a subset $C \subset \Tan^*M$ with similar properties.
The annihilator $D^\circ$ of a distribution~$D$  is a codistribution.

A (smooth) differential 1-form $\eta \in \Omega^1(M)$
generates a smooth codistribution that we denote by
$\langle \eta \rangle \subset \Tan^*M$;
it has rank~1 at every point where $\eta$ does not vanish.
Its annihilator is a distribution 
$\langle \eta \rangle^\circ \subset \Tan M$;
it can be described also as the kernel
of the vector bundle morphism
$\widehat \eta \colon \Tan M \to M \times \R$
defined by~$\eta$.
This distribution has corank~1 
at every point where $\eta$ does not vanish.

Now, given $k$ differential 1-forms
$\eta^1, \ldots, \eta^k \in \Omega^1(M)$, let:
\beann
\mathcal{C}^{\mathrm{C}} &=&
\langle \eta^1, \ldots, \eta^k \rangle \subset
\Tan^*M \ ,\\
\mathcal{D}^{\mathrm{C}} &=&
\left( \mathcal{C}^{\mathrm{C}} \right)^\circ =
\ker \widehat{\eta^1} \cap \ldots \cap \ker \widehat{\eta^k} \subset
\Tan M \ , \\
\mathcal{D}^{\mathrm{R}} &=&
\ker \widehat{\d \eta^1} \cap \ldots \cap \ker \widehat{\d \eta^k} \subset
\Tan M \ , \\
\mathcal{C}^{\mathrm{R}} &=& 
\left( \mathcal{D}^{\mathrm{R}} \right)^\circ\subset
\Tan^*M \ .
\eeann

\begin{dfn}
\label{kconman}
A \textbf{$k$-contact structure} on $M$ is a family of $k$ differential 1-forms 
$\eta^\alpha \in \Omega^1(M)$ such that, with the preceding notations,
\begin{enumerate}[(i)]
\item 
$\mathcal{D}^{\mathrm{C}} \subset \Tan M$
is a regular distribution of corank~$k$;
or, what is equivalent,
$\eta^1 \wedge \ldots \wedge \eta^k \neq 0$, at every point.
\item 
$\mathcal{D}^{\mathrm{R}} \subset \Tan M$
is a regular distribution of rank~$k$.
\item
$\mathcal{D}^{\mathrm{C}} \cap \mathcal{D}^{\mathrm{R}} = \{0\}$  or, what is equivalent,
$\displaystyle
\bigcap_{\alpha=1}^{k} \left(
\ker \widehat{\eta^\alpha} \cap \ker \widehat{\d \eta^\alpha}\right) =\{0\}$.
\end{enumerate}
We call
$\mathcal{C}^{\mathrm{C}}$
the \textbf{contact codistribution};
$\mathcal{D}^{\mathrm{C}}$ 
the \textbf{contact distribution};
$\mathcal{D}^{\mathrm{R}}$ 
the \textbf{Reeb distribution};
and
$\mathcal{C}^{\mathrm{R}}$
the \textbf{Reeb codistribution}.
\\
A \textbf{$k$-contact manifold} is a manifold endowed with a $k$-contact structure.
\end{dfn}

\begin{obs}\rm
If conditions (i) and (ii) hold, then (iii) is equivalent to 
$$
{\it (iii\,')}\ 
\Tan M = \mathcal{D}^{\mathrm{C}} \oplus \mathcal{D}^{\mathrm{R}}  .
$$
\end{obs}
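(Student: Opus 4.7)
The plan is to argue pointwise via the Grassmann dimension formula, since both $\mathcal{D}^{\mathrm{C}}$ and $\mathcal{D}^{\mathrm{R}}$ are regular distributions whose ranks are determined by hypotheses (i) and~(ii).

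First I fix a point $x\in M$ and record the dimensions: hypothesis~(i) says $\mathcal{D}^{\mathrm{C}}$ has corank~$k$, so $\dim\mathcal{D}^{\mathrm{C}}_x = \dim M - k$; hypothesis~(ii) says $\dim\mathcal{D}^{\mathrm{R}}_x = k$. Therefore $\dim\mathcal{D}^{\mathrm{C}}_x + \dim\mathcal{D}^{\mathrm{R}}_x = \dim M$. Applying the Grassmann formula to the subspaces $\mathcal{D}^{\mathrm{C}}_x,\mathcal{D}^{\mathrm{R}}_x\subset\Tan_xM$ gives
$$
\dim\bigl(\mathcal{D}^{\mathrm{C}}_x + \mathcal{D}^{\mathrm{R}}_x\bigr) = \dim M - \dim\bigl(\mathcal{D}^{\mathrm{C}}_x\cap\mathcal{D}^{\mathrm{R}}_x\bigr).
$$
From this identity the equivalence is immediate in both directions. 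If (iii) holds, then $\mathcal{D}^{\mathrm{C}}_x\cap\mathcal{D}^{\mathrm{R}}_x=\{0\}$, so the sum is all of $\Tan_xM$ and is direct by triviality of the intersection; conversely, if $\Tan_xM = \mathcal{D}^{\mathrm{C}}_x\oplus\mathcal{D}^{\mathrm{R}}_x$, the very definition of the direct sum gives trivial intersection, which is~(iii).

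Since this argument is valid at every point and both distributions are smooth of constant rank, the pointwise decomposition assembles into the distribution identity $\Tan M = \mathcal{D}^{\mathrm{C}}\oplus\mathcal{D}^{\mathrm{R}}$, and conversely. There is no real obstacle here; the only thing to keep in mind is that without conditions~(i) and~(ii) the dimension count collapses and the equivalence fails, so the hypotheses must be used explicitly to match ranks with the codimension of $\mathcal{D}^{\mathrm{C}}$. The equivalent reformulation of~(i) as $\eta^1\wedge\cdots\wedge\eta^k\neq 0$ is not needed for this remark, although it is what guarantees the corank statement.
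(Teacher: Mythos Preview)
Your argument is correct: the pointwise Grassmann dimension formula, together with the rank/corank values furnished by (i) and~(ii), is exactly what makes the equivalence of (iii) and (iii$'$) an elementary linear-algebra fact. The paper states this remark without proof, so there is no alternative approach to compare against; your write-up simply supplies the routine justification the authors left to the reader.
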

For $k=1$ we recover the definition of contact structure.

\begin{thm}
\label{reebvf}
Let  $(M,\eta^\alpha)$ be a $k$--contact manifold.
\ben
\item
The Reeb distribution 
$\mathcal{D}^{\mathrm{R}}$ 
is involutive,
and therefore integrable.
\item
There exist $k$ vector fields 
$R_\alpha \in \mathfrak{X}(M)$,
the \emph{Reeb vector fields},
uniquely defined by the relations
\beq
\label{reebcontact}
\inn({\Reeb_\beta}) \eta^\alpha = \delta^\alpha_{\,\beta}
\,,\quad 
\inn({\Reeb_\beta}) \d\eta^\alpha = 0
\,.
\eeq
\item
The Reeb vector fields commute,
$\displaystyle [\Reeb_\alpha,\Reeb_\beta] = 0$,
and they generate $\mathcal{D}^{\mathrm{R}}$.
\een
\end{thm}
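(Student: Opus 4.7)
The plan is to prove the three parts in the order (2), (1), (3), because the existence of the Reeb vector fields is the core algebraic fact, and the other two parts follow from it via short Cartan-calculus arguments.

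For part (2), I will exploit the decomposition condition (iii). Fix a point $x \in M$ and consider the linear map $\Phi_x \colon \mathcal{D}^{\mathrm{R}}_x \to \R^k$ given by $v \mapsto (\eta^1_x(v),\ldots,\eta^k_x(v))$. By condition (iii), $\ker \Phi_x = \mathcal{D}^{\mathrm{R}}_x \cap \mathcal{D}^{\mathrm{C}}_x = \{0\}$. Since both spaces have dimension $k$ (by (ii) and the definition of $\R^k$), $\Phi_x$ is a linear isomorphism. Hence there exist unique vectors $\Reeb_\alpha(x) \in \mathcal{D}^{\mathrm{R}}_x$ with $\eta^\alpha_x(\Reeb_\beta(x)) = \delta^\alpha_{\,\beta}$; the condition $\inn(\Reeb_\beta) \d\eta^\alpha = 0$ then holds automatically because $\Reeb_\beta \in \mathcal{D}^{\mathrm{R}}$. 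Smoothness of the $\Reeb_\beta$ follows from the fact that $\Phi$ is a smooth vector bundle morphism which is a fibrewise isomorphism, hence its inverse is smooth.

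For part (1), I will show directly that $\mathcal{D}^{\mathrm{R}}$ is involutive, which by Frobenius (since it is regular of rank~$k$ by (ii)) yields integrability. Given $X, Y \in \mathfrak{X}(M)$ with $X, Y \in \mathcal{D}^{\mathrm{R}}$, the identity $\inn_{[X,Y]} \d\eta^\alpha = \Lie_X \inn_Y \d\eta^\alpha - \inn_Y \Lie_X \d\eta^\alpha$ combined with $\inn_Y \d\eta^\alpha = 0$ and $\Lie_X \d\eta^\alpha = \d \inn_X \d\eta^\alpha + \inn_X \d\d\eta^\alpha = 0$ (using $\inn_X \d\eta^\alpha = 0$ and $\d^2 = 0$) gives $\inn_{[X,Y]} \d\eta^\alpha = 0$, so $[X,Y] \in \mathcal{D}^{\mathrm{R}}$.

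For part (3), I will first compute $\eta^\gamma([\Reeb_\alpha, \Reeb_\beta])$ using the standard formula
\[
\d\eta^\gamma(\Reeb_\alpha,\Reeb_\beta) = \Reeb_\alpha(\eta^\gamma(\Reeb_\beta)) - \Reeb_\beta(\eta^\gamma(\Reeb_\alpha)) - \eta^\gamma([\Reeb_\alpha,\Reeb_\beta]).
\]
The left-hand side vanishes since $\inn_{\Reeb_\alpha} \d\eta^\gamma = 0$, and the first two terms on the right vanish because $\eta^\gamma(\Reeb_\beta) = \delta^\gamma_{\,\beta}$ is constant. Thus $\eta^\gamma([\Reeb_\alpha,\Reeb_\beta]) = 0$ for all $\gamma$, i.e. $[\Reeb_\alpha, \Reeb_\beta] \in \mathcal{D}^{\mathrm{C}}$. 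On the other hand, part (1) gives $[\Reeb_\alpha, \Reeb_\beta] \in \mathcal{D}^{\mathrm{R}}$, so by (iii) the bracket is zero. That the $\Reeb_\alpha$ generate $\mathcal{D}^{\mathrm{R}}$ is immediate: they lie in $\mathcal{D}^{\mathrm{R}}$, there are $k$ of them, and they are pointwise linearly independent since pairing with the $\eta^\beta$ produces the identity matrix.

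The main conceptual step is part (2), specifically noticing that condition (iii) is exactly what upgrades the pairing $(\eta^\alpha, \mathcal{D}^{\mathrm{R}})$ to a duality; the rest is routine Cartan calculus that mirrors the classical $k=1$ contact case.
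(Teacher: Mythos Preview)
Your proof is correct. Note, however, that the paper does not actually supply a proof of this theorem: it appears in the preliminaries (Section~\ref{kcontact}) as a review of results established in the earlier paper \cite{GGMRR-2020}, so there is no ``paper's own proof'' here to compare against. Your argument is the natural one and essentially matches what one finds in that reference: condition~(iii) gives the fibrewise duality between $\mathcal{D}^{\mathrm{R}}$ and $\langle\eta^1,\ldots,\eta^k\rangle$ that produces the Reeb fields, and the involutivity and commutation statements drop out of Cartan's formula together with~(iii).
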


There are coordinates
$(x^I;s^\alpha)$
such that
$$
    \Reeb_\alpha = \frac{\partial}{\partial s^\alpha}
    \,,\quad
    \eta^\alpha = \d s^\alpha - f_I^\alpha(x) \,\d x^I
    \,,
$$
where $f_I^\alpha(x)$ are functions depending only on the~$x^I$,
which are called $\textbf{adapted coordinates}$ (to the $k$-contact structure).

\begin{exmpl}
\label{example-canmodel}
\rm
Given $k \geq 1$,
the manifold
$(\oplus^k \Tan^*Q) \times \R^k$
has a canonical $k$-contact structure defined by the 1-forms
$$
\eta^\alpha = \d s^\alpha - \theta^\alpha
\,,
$$
where
$s^\alpha$
is the $\alpha$-th cartesian coordinate of $\R^k$,
and 
$\theta^\alpha$
is the pull-back of the canonical 1-form of $\Tan^*Q$
with respect to the projection
$(\oplus^k \Tan^*Q) \times \R^k\to \Tan^*Q$
to the $\alpha$-th direct summand.
Using coordinates $q^i$ on~$Q$ and natural coordinates
$(q^i,p_i^\alpha)$ on each $\Tan^*Q$,
their local expressions are
$$
\eta^\alpha = \d s^\alpha - p^\alpha_i \,\d q^i
\,,
$$
from which 
$\d \eta^\alpha = \d q^i \wedge \d p_i^\alpha$,
and the Reeb vector fields are
$$
\Reeb_\alpha = \frac{\partial}{\partial s^\alpha}
\,.
$$
\end{exmpl}

The following result ensures the existence of canonical coordinates
for a particular kind of $k$-contact manifolds:

\begin{thm}[$k$-contact Darboux theorem]
\label{Darboux k-contact}
Let  $(M,\eta^\alpha)$ be a $k$--contact manifold of dimension $n+kn+k$
such that there exists an integrable subdistribution ${\cal V}$ of ${\cal D}^{\rm C}$
with ${\rm rank}\,{\cal V}=nk$.
Around every point of $M$, there exists a local chart of coordinates 
$(U;q^i,p^\alpha_i,s^\alpha)$, $1\leq\alpha\leq k \ ,\ 1\leq i \leq n$,
 such that
$$
\restr{\eta^\alpha}{U}=\d s^\alpha-p_i^\alpha\,\d q^i \;.
$$
In these coordinates,
$$
{\cal D}^{\rm R}\vert_{U}=\left\langle\Reeb_\alpha=\frac{\partial}{\partial s^\alpha}\right\rangle 
\quad , \quad
{\cal V}\vert_{U}=\left\langle\frac{\partial}{\partial p_i^\alpha}\right\rangle
\ .
$$
These are the so-called {\rm canonical} or {\rm Darboux coordinates} of the $k$-contact manifold.
\end{thm}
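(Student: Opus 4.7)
The plan is to first straighten out the Reeb foliation via commuting flows, then identify a $k$-symplectic structure on the local Reeb quotient and invoke the $k$-symplectic Darboux theorem, and finally shift the Reeb coordinate by an exact term to obtain the stated form.

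First I would use Theorem \ref{reebvf} to get that the Reeb vector fields $\Reeb_\alpha$ are pointwise linearly independent and commute. The simultaneous rectification theorem for commuting vector fields then produces local coordinates $(y^A,s^\alpha)$, $A=1,\ldots,n+nk$, $\alpha=1,\ldots,k$, in which $\Reeb_\alpha=\partial/\partial s^\alpha$. Combining $\inn(\Reeb_\beta)\eta^\alpha=\delta^\alpha_\beta$ (which fixes the $\d s$-coefficients) with $\Lie_{\Reeb_\beta}\eta^\alpha=0$ (obtained from Cartan's formula and the two Reeb identities, which forbids any $s^\beta$-dependence), I would conclude $\eta^\alpha=\d s^\alpha+f^\alpha_A(y)\,\d y^A$, so that $\d\eta^\alpha=\d f^\alpha_A\wedge\d y^A$ is pulled back from the local quotient $N$ of $M$ by the Reeb flows.

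Next I would analyze $\mathcal{V}$: since $\mathcal{V}\subset\ker\eta^\alpha$ and $\mathcal{V}$ is involutive, the Cartan identity $\d\eta^\alpha(V_1,V_2)=V_1(\eta^\alpha(V_2))-V_2(\eta^\alpha(V_1))-\eta^\alpha([V_1,V_2])$ vanishes on pairs drawn from $\mathcal{V}$, so $\mathcal{V}$ is $\d\eta^\alpha$-isotropic; and $\mathcal{V}\cap\mathcal{D}^{\rm R}=\{0\}$ makes the projection to $N$ injective on each fibre of $\mathcal{V}$. Granted that $\mathcal{V}$ is Reeb-invariant, it descends to an integrable rank-$nk$ distribution $\bar{\mathcal{V}}$ on $N$, and $(\d\theta^\alpha,\bar{\mathcal{V}})$ with $\theta^\alpha:=f^\alpha_A\,\d y^A$ defines a $k$-symplectic structure on the $(n+nk)$-dimensional $N$: closedness is automatic, isotropy of $\bar{\mathcal{V}}$ has just been established, and $\bigcap_\alpha\ker\d\theta^\alpha|_N=\{0\}$ follows from condition (iii) of Definition \ref{kconman} by a dimension count. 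Awane's $k$-symplectic Darboux theorem then yields coordinates $(q^i,p^\alpha_i)$ on $N$ with $\d\theta^\alpha=\d q^i\wedge\d p^\alpha_i$ and $\bar{\mathcal{V}}=\langle\partial/\partial p^\alpha_i\rangle$. Pulled back, this gives $\d(\theta^\alpha+p^\alpha_i\,\d q^i)=0$, so locally $\theta^\alpha+p^\alpha_i\,\d q^i=\d g^\alpha(q,p)$ for some functions $g^\alpha$; setting $\tilde s^\alpha:=s^\alpha+g^\alpha$ (which leaves $\Reeb_\alpha$ and $\mathcal{V}$ intact because $g^\alpha$ is independent of $s$) converts the Reeb-adapted expression into $\eta^\alpha=\d\tilde s^\alpha-p^\alpha_i\,\d q^i$, as required.

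The step I expect to be the main obstacle is the Reeb-invariance of $\mathcal{V}$, which is not explicitly in the hypotheses. To secure it I would exploit the fact that the flow of each $\Reeb_\alpha$ preserves both $\mathcal{D}^{\rm C}$ and every $\d\eta^\beta$ (from $\Lie_{\Reeb_\alpha}\eta^\beta=0$), hence sends $\mathcal{V}$ to another integrable, $\d\eta^\beta$-isotropic, rank-$nk$ subdistribution of $\mathcal{D}^{\rm C}$. A rigidity argument in the spirit of the uniqueness of a Lagrangian distribution transverse to a given polarization should then identify the image with $\mathcal{V}$ itself. If this rigidity fails in full generality, an alternative would be to first apply Frobenius directly to $\mathcal{V}$, obtaining coordinates $(u^I,v^j)$ with $\mathcal{V}=\langle\partial/\partial v^j\rangle$ and $\eta^\alpha=A^\alpha_I(u,v)\,\d u^I$, and then to use the rank-$k$ condition on the matrix $(A^\alpha_I)$ together with the Reeb identities to rename $k$ of the $u^I$'s as the sought $s^\alpha$ and the remaining $n$ as $q^i$, defining $p^\alpha_i$ by the resulting invertibility of the gauge transformation $\eta^\alpha\mapsto\d s^\alpha-p^\alpha_i\,\d q^i$.
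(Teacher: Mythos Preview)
The paper does not prove this theorem; it appears in the preliminaries section as a result quoted from the authors' earlier paper \cite{GGMRR-2020}, so there is no in-paper proof to compare against. Your overall architecture --- rectify the commuting Reeb vector fields, observe that the forms $\eta^\alpha$ become $\d s^\alpha+\theta^\alpha$ with $\theta^\alpha$ basic for the Reeb foliation, descend to a $k$-symplectic structure on the local quotient, invoke Awane's Darboux theorem, and finally absorb an exact correction into the $s^\alpha$ --- is the natural route and almost certainly coincides with the original argument in \cite{GGMRR-2020}.

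You have, however, correctly located the genuine obstruction and your proposed fixes do not close it. The ``rigidity'' argument fails: the Reeb flow does send $\mathcal{V}$ to another integrable, $\d\eta^\beta$-isotropic, rank-$nk$ subdistribution of $\mathcal{D}^{\rm C}$, but such subdistributions are far from unique (already in the symplectic case there are infinitely many Lagrangian distributions), so there is no abstract reason the image must equal $\mathcal{V}$. Your fallback --- Frobenius on $\mathcal{V}$ first, giving $\eta^\alpha=A^\alpha_I(u,v)\,\d u^I$ --- is more promising, but as written it is incomplete: nothing yet forces the $A^\alpha_I$ to be independent of the leaf coordinates $v^j$, and without that you cannot simply ``rename $k$ of the $u^I$'s as $s^\alpha$'' and land in the stated normal form. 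To make this route work you would need to combine the Frobenius chart for $\mathcal{V}$ with the adapted chart for $\mathcal{D}^{\rm R}$ simultaneously (using that $\mathcal{V}\oplus\mathcal{D}^{\rm R}$ is itself integrable, which follows from $\Lie_{\Reeb_\alpha}\eta^\beta=0$ together with the isotropy of $\mathcal{V}$), or else add Reeb-invariance of $\mathcal{V}$ as an explicit hypothesis, as is sometimes done in the $k$-symplectic literature.
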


This theorem allows us to consider the manifold presented in the example \ref{example-canmodel} as the canonical model
for these kinds of $k$-contact manifolds.

\subsection{\texorpdfstring{$k$}--contact Hamiltonian systems}

Together with $k$-contact structures,
$k$-contact Hamiltonian systems have also been defined in
\cite{GGMRR-2020}.

A \textbf{$k$-contact Hamiltonian system} is a family $(M,\eta^\alpha,\H)$,
where $(M,\eta^\alpha)$ is a $k$-contact manifold,
and $\H\in \Cinfty(M)$ is called a \textbf{Hamiltonian function}.
The \textbf{$k$-contact Hamilton--de Donder--Weyl equations} for a map
$\psi\colon D\subset\R^k\to M$ are
\begin{equation}
\begin{cases}
i(\psi'_\alpha)\d\eta^\alpha = \big(\d\H - (\Lie_{\Reeb_\alpha}\H)\eta^\alpha\big)\circ\psi \ ,\\
i(\psi'_\alpha)\eta^\alpha = - \H\circ\psi \ .
\end{cases}
\label{hec}
\end{equation}

The \textbf{$k$-contact Hamilton--de Donder--Weyl equations} 
for a $k$-vector field ${\bf X}=(X_1,\dots,X_k)$ in $M$ are
\beq \begin{cases}
    \inn({X_\alpha})\d\eta^\alpha=\d\H-(\Lie_{\Reeb_\alpha}\H)\eta^\alpha \ ,\\
    \inn(X_\alpha)\eta^\alpha=-\H \ .
    \end{cases}
    \label{fieldcontact}
\eeq
Their solutions are called
\textbf{Hamiltonian $k$-vector fields}.
These equations are equivalent to
\beq
\label{eq:kcontact3}
\begin{cases}
\Lie_{X_\alpha}\eta^\alpha = - (\Lie_{\Reeb_\alpha}\H)\eta^\alpha \ , \\
\inn(X_\alpha)\eta^\alpha = - \H \ .
\end{cases}
\eeq
Solutions to these equations always exist, although
they are neither unique, nor necessarily integrable.

If ${\bf X}$ is an {\sl integrable} $k$-vector field in $M$, then
every integral map $\psi \colon D\subset\R^k \to M$ of ${\bf X}$
satisfies the $k$-contact equation \eqref{hec}
if, and only if,
${\bf X}$ is a solution to \eqref{fieldcontact}.
Notice, however, that 
equations \eqref{hec} and \eqref{fieldcontact}
are not, in general, fully equivalent, since
a solution to \eqref{hec} may not be
an integral map of some integrable $k$-vector field in~$M$ solution to \eqref{fieldcontact}.

An alternative,
partially equivalent, expression for the Hamilton--De Donder--Weyl equations,
which does not use
the Reeb vector fields $\Reeb_\alpha$, can be given as follows.
Consider the 2-forms $\Omega^\alpha= -\H\,\d\eta^\alpha+\d\H\wedge\eta^\alpha$.
On the open set ${\cal O}=\{ p\in M\ ;\ \H(p)\not=0\}$,
if a $k$-vector field ${\bf X}=(X_\alpha)$ satisfies
\begin{equation}
\label{hamilton-eqs-no-reeb}
    \begin{cases}
        i(X_\alpha)\Omega^\alpha = 0\,,\\
        i(X_\alpha)\eta^\alpha = -\H\,,
    \end{cases}
\end{equation}
then ${\bf X}$ is a solution of the Hamilton--De Donder--Weyl equations \eqref{fieldcontact}).
Any integral map $\psi$ of such a $k$-vector field is a solution to
\begin{equation}
 \begin{cases}
i(\psi'_\alpha)\Omega^\alpha = 0 
\:,\\
i(\psi'_\alpha)\eta^\alpha = - \H\circ\psi 
\:.
\end{cases}
\label{hec2}
\end{equation}

\begin{obs}
{\rm If the family $(M,\eta^\alpha)$
does not hold some of the conditions of Definition \eqref{kconman},
then $(M,\eta^\alpha)$ is called a {\sl $k$-precontact manifold}
and $(M,\eta^\alpha,\H)$ is said to be a \textsl{$k$-precontact Hamiltonian system}.
In this case, the Reeb vector fields are not uniquely defined.
However, as it happens in other similar situations
(precosymplectic mechanics, $k$-precosymplectic field theories or precontact mechanics) \cite{DeLeon2019,GRR-2019},
it could be proved that equations \eqref{hec} and \eqref{fieldcontact}
does not depend on the used Reeb vector fields and, thus,
the equations are still valid.
}
\end{obs}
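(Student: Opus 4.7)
The plan is to trace what happens to each side of equations \eqref{hec}, \eqref{fieldcontact} and \eqref{eq:kcontact3} when one replaces a family of Reeb vector fields by another. In the $k$-precontact setting, the defining relations \eqref{reebcontact} (taken as the definition of what a ``Reeb family'' means) may fail to pin down the $\Reeb_\alpha$ uniquely, since condition (iii) of Definition \ref{kconman} is what secures uniqueness in Theorem \ref{reebvf}. Concretely, let $\{\Reeb_\alpha\}$ and $\{\Reeb'_\alpha\}$ be two families both satisfying \eqref{reebcontact}, and set $Z_\alpha := \Reeb'_\alpha-\Reeb_\alpha$. Subtracting the two sets of defining relations gives at once
\[
\inn(Z_\beta)\eta^\alpha = 0, \qquad \inn(Z_\beta)\d\eta^\alpha = 0, \qquad \text{for all } \alpha,\beta,
\]
so each $Z_\alpha$ lies in the ambiguity subspace $\bigcap_\gamma (\ker\widehat{\eta^\gamma}\cap\ker\widehat{\d\eta^\gamma})$, which is exactly the obstruction to condition (iii).

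The next step is a direct algebraic comparison. The only place the Reeb vector fields enter the Hamilton--De Donder--Weyl equations is through the $1$-form $(\Lie_{\Reeb_\alpha}\H)\eta^\alpha$ on the right-hand side of the first equation of \eqref{fieldcontact}. The difference between the two choices is therefore
\[
\Delta \;:=\; (\Lie_{\Reeb'_\alpha}\H)\eta^\alpha - (\Lie_{\Reeb_\alpha}\H)\eta^\alpha \;=\; (\Lie_{Z_\alpha}\H)\,\eta^\alpha.
\]
To see that $\Delta$ vanishes, I would contract the first equation of \eqref{fieldcontact} with an arbitrary $Z_\beta$. The left-hand side yields $\inn(Z_\beta)\inn(X_\alpha)\d\eta^\alpha = -\inn(X_\alpha)\inn(Z_\beta)\d\eta^\alpha = 0$ by the first step, while the right-hand side gives $\inn(Z_\beta)\d\H - (\Lie_{\Reeb_\alpha}\H)\,\inn(Z_\beta)\eta^\alpha = \Lie_{Z_\beta}\H$. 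Hence $\Lie_{Z_\beta}\H=0$ is a necessary consistency condition for the existence of any solution; under it, each term of $\Delta$ vanishes separately and the two equations coincide. Since the second equation in \eqref{fieldcontact} does not involve the Reeb vector fields at all, this proves Reeb-independence of \eqref{fieldcontact}; equation \eqref{hec} is obtained by evaluation along integral maps so it inherits the property, and \eqref{eq:kcontact3} is equivalent to \eqref{fieldcontact} via Cartan's formula $\Lie_{X_\alpha}\eta^\alpha = \inn(X_\alpha)\d\eta^\alpha + \d(\inn(X_\alpha)\eta^\alpha)$ together with the second equation.

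The main conceptual obstacle, and the point that deserves care, is the status of the compatibility $\Lie_{Z_\beta}\H=0$. It is not gratuitous: it is forced by the equations themselves as soon as one asks for solutions to exist, and it is the exact analogue of the initial constraints produced by the Gotay--Nester--Hinds algorithm in presymplectic and precosymplectic mechanics, or of the tangency conditions used for precontact Lagrangian systems in \cite{DeLeon2019,GRR-2019} cited in the remark. Once this condition is understood as part of the definition of a well-posed $k$-precontact Hamiltonian system, the argument above shows that both the form and the solution set of \eqref{hec}, \eqref{fieldcontact} and \eqref{eq:kcontact3} are genuinely intrinsic to the data $(M,\eta^\alpha,\H)$, independent of any particular choice of Reeb vector fields.
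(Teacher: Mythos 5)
The paper offers no proof of this remark: it simply asserts that Reeb-independence ``could be proved,'' deferring to analogous results for precosymplectic and precontact mechanics in the cited references. Your proposal therefore supplies an argument where the paper supplies none, and the argument you give is essentially sound. Taking \eqref{reebcontact} as the definition of an admissible Reeb family, the difference $Z_\beta=\Reeb'_\beta-\Reeb_\beta$ of two such families does lie in $\bigcap_\gamma\bigl(\ker\widehat{\eta^\gamma}\cap\ker\widehat{\d\eta^\gamma}\bigr)$, and contracting the first equation of \eqref{fieldcontact} with $Z_\beta$ correctly kills the left-hand side and isolates $\Lie_{Z_\beta}\H$ on the right, so that on the locus where solutions exist the two right-hand sides coincide and the solution sets for the two Reeb families are identical; you are also right, and commendably explicit, that $\Lie_{Z_\beta}\H=0$ is a derived consistency condition of the same nature as the constraints produced by the Gotay--Nester algorithm, so what is actually proved is invariance of the solution set rather than a pointwise identity of $1$-forms. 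Two refinements are worth recording. First, in a genuinely $k$-precontact situation the relations \eqref{reebcontact} may admit \emph{no} solution (for instance where condition (i) of Definition \ref{kconman} fails and the $\eta^\alpha$ become linearly dependent), so your argument covers precisely the case the remark addresses --- Reeb families exist but are non-unique --- and one should say so. Second, since the paper itself warns that \eqref{hec} and \eqref{fieldcontact} are not fully equivalent, the independence of \eqref{hec} is better obtained by running the same contraction with $Z_\beta$ directly along $\psi$ (which works verbatim, giving $\Lie_{Z_\beta}\H=0$ on the image of $\psi$) rather than by appealing to integral maps of $k$-vector fields.
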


In canonical coordinates, if $\psi=(q^i(t^\beta),p^\alpha_i(t^\beta),s^\alpha(t^\beta))$, 
then $\displaystyle\psi'_\alpha=
\Big(q^i,p^\alpha_i,s^\alpha,\derpar{q^i}{t^\beta},\derpar{p^\alpha_i}{t^\beta},\derpar{s^\alpha}{t^\beta}\Big)$,
and these equations read
\beq
    \begin{cases}
        \displaystyle \frac{\partial q^i}{\partial t^\alpha} = \frac{\partial\H}{\partial p^\alpha_i}\circ\psi \ ,\\[15pt]
        \displaystyle \frac{\partial p^\alpha_i}{\partial t^\alpha} = 
        -\left(\frac{\partial\H}{\partial q^i}+ p_i^\alpha\frac{\partial\H}{\partial s^\alpha}\right)\circ\psi \ ,\\[15pt]
        \displaystyle \frac{\partial s^\alpha}{\partial t^\alpha} = 
        \left(p_i^\alpha\frac{\partial\H}{\partial p^\alpha_i}-\H\right)\circ\psi\ ,
    \end{cases}
    \label{coor1}
\eeq
If ${\bf X}=(X_\alpha)$ is a $k$-vector field solution to \eqref{hamilton-eqs-no-reeb}
and in canonical coordinates we have that
$\displaystyle
X_\alpha= X_\alpha^\beta\derpar{}{s^\beta}+ X_\alpha^i\frac{\partial}{\partial q^i}+
X_{\alpha i}^\beta\frac{\partial}{\partial p_i^\beta}$, 
then
\beq
\begin{cases}
        \displaystyle X_\alpha^i= \frac{\partial\H}{\partial p^\alpha_i} \ ,\\[15pt]
        \displaystyle X_{\alpha i}^\alpha = 
        -\left(\frac{\partial\H}{\partial q^i}+ p_i^\alpha\frac{\partial\H}{\partial s^\alpha}\right) \ ,\\[15pt]
        \displaystyle X_\alpha^\alpha = 
        p_i^\alpha\frac{\partial\H}{\partial p^\alpha_i}-\H\ ,
    \end{cases}
    \label{coor2}
\eeq
\section{\texorpdfstring{$k$}--contact  Lagrangian field theory}
\label{klagft}

\subsection{\texorpdfstring{$k$}--contact Lagrangian systems}

Using the geometric framework introduced in Section \ref{1sect},
we are ready to deal with 
Lagrangian systems with dissipation in field theories.
First we need to enlarge the bundle in order to include the dissipation variables.
Then, consider the bundle $\oplus^k \Tan Q\times\R^k$ with canonical projections
$$
\bar\tau_1\colon \oplus^k \Tan Q\times\R^k\to\oplus^k \Tan Q
 \quad , \quad
\bar\tau^k\colon \oplus^k \Tan Q\times\R^k\to\Tan Q
\quad , \quad
s^\alpha\colon \oplus^k \Tan Q\times\R^k\to\R \ .
$$
Natural coordinates in $\oplus^k \Tan Q\times\R^k$ are $(q^i,v^i_\alpha,s^\alpha)$.

As $\oplus^k \Tan Q\times\R^k\to\oplus^k \Tan Q$ is a trivial bundle, 
the canonical structures in $\oplus^k \Tan Q$ (the 
canonical $k$-tangent structure and the
Liouville vector field described above)
can be extended to $\oplus^k \Tan Q\times\R^k$ in a natural way,
and are denoted with the same notation $(J^\alpha)$ and $\Delta$.
Then, using these structures, we can extend also the concept of {\sc sode} $k$-vector fields to $\oplus^k \Tan Q\times\R^k$ as follows:
			
\begin{dfn}
    A $k$-vector field $\mbox{\boldmath $\Gamma$}=(\Gamma_\alpha)$ in $\oplus^k \Tan Q\times\R^k$ is a 
    \textbf{second order partial differential equation} (\textsc{sopde}) if
    $J^\alpha(\Gamma_\alpha)=\Delta$.
\end{dfn}
            
The local expression of a {\sc sopde} is
\begin{equation}
\label{localsode2}
\Gamma_\alpha= 
v^i_\alpha\frac{\displaystyle\partial} {\displaystyle
\partial q^i}+\Gamma_{\alpha\beta}^i\frac{\displaystyle\partial}{\displaystyle \partial v^i_\beta}+g^\beta_\alpha\,\frac{\partial}{\partial s^\beta}\ .
\end{equation}

\begin{dfn}
\label{de652}
Let $\psi\colon\R^k\rightarrow Q\times\R^k$ be a section of the projection
$Q\times\R^k\to\R^k$;
with $\psi=(\phi,s^\alpha)$, where
$\phi\colon\R^k\to Q$.
The \textbf{first prolongation} 
 of $\psi$ to $\oplus^k \Tan Q\times\R^k$ is the map
$\mbox{\boldmath $\sigma$}\colon\R^k\to\oplus^k \Tan Q\times\R^k$
given by
$\mbox{\boldmath $\sigma$}=(\phi',s^\alpha)$.
The map $\mbox{\boldmath $\sigma$}$ is said to be \textbf{holonomic}.
\end{dfn}

The following property is a straightforward consequence of
the above definitions and the results about {\sc sopdes} in the bundle
$\oplus^k \Tan Q$ given in Section \ref{1sect}:

\begin{prop}\label{lem0}
A $k$-vector field $\mbox{\boldmath $\Gamma$}$ in $\oplus^k \Tan Q\times\R^k$ is a {\sc sopde}
if, and only if, its integral maps are holonomic.
\end{prop}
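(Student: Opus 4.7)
The plan is to work in natural coordinates $(q^i, v^i_\alpha, s^\alpha)$ and reduce the statement to the analogous result for $\oplus^k \Tan Q$ stated in Section~\ref{1sect}. Because $\oplus^k \Tan Q\times\R^k\to \oplus^k \Tan Q$ is a trivial bundle and both $J^\alpha$ and $\Delta$ were extended to act trivially on the $\R^k$-factor, the SOPDE condition involves only the $(q^i,v^i_\alpha)$-block of $\mathbf{\Gamma}$; likewise, by Definition~\ref{de652} the holonomy of $\boldsymbol{\sigma}=(\phi',s^\alpha)$ constrains only its $\oplus^k \Tan Q$-part. The $s^\alpha$-coordinates of $\boldsymbol{\sigma}$ and the $g^\beta_\alpha$-coefficients of $\Gamma_\alpha$ play no role in the equivalence.

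Write a generic $k$-vector field as
$$
\Gamma_\alpha = A^i_\alpha\,\frac{\partial}{\partial q^i} + B^i_{\alpha\beta}\,\frac{\partial}{\partial v^i_\beta} + g^\beta_\alpha\,\frac{\partial}{\partial s^\beta}\,.
$$
Since $J^\alpha = \partial/\partial v^i_\alpha \otimes \d q^i$ sends $\partial/\partial q^i$ to $\partial/\partial v^i_\alpha$ and annihilates the remaining coordinate vector fields, a short calculation gives $J^\alpha(\Gamma_\alpha) = A^i_\alpha\,\partial/\partial v^i_\alpha$, while $\Delta = v^i_\alpha\,\partial/\partial v^i_\alpha$. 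Hence the SOPDE condition $J^\alpha(\Gamma_\alpha)=\Delta$ is exactly $A^i_\alpha = v^i_\alpha$, i.e.\ the local form \eqref{localsode2}. On the other hand, writing an integral map as $\boldsymbol{\sigma}(t)=(q^i(t), v^i_\beta(t), s^\beta(t))$, the integral-map equation $\boldsymbol{\sigma}'_\alpha = \Gamma_\alpha\circ\boldsymbol{\sigma}$ reads in particular
$$
\frac{\partial q^i}{\partial t^\alpha}(t) = A^i_\alpha(\boldsymbol{\sigma}(t))\,,
$$
whereas $\boldsymbol{\sigma}$ being holonomic amounts (by Definition~\ref{de652} applied componentwise) to $v^i_\alpha(t) = \partial q^i(t)/\partial t^\alpha$.

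The two implications now follow by combining these two observations. If $\mathbf{\Gamma}$ is a SOPDE then $A^i_\alpha = v^i_\alpha$ identically, so the integral-map equation itself becomes the holonomy condition and any integral map is holonomic. Conversely, if every integral map $\boldsymbol{\sigma}$ of $\mathbf{\Gamma}$ is holonomic, then along it $A^i_\alpha \circ\boldsymbol{\sigma} = v^i_\alpha\circ\boldsymbol{\sigma}$, and evaluating at the basepoint gives the pointwise identity $A^i_\alpha=v^i_\alpha$. The only delicate step, and the main obstacle worth flagging, is to guarantee that enough integral maps exist to cover the whole manifold so that this identity propagates from initial values to a global equality; this is handled in exactly the same way as for the preliminary $\oplus^k\Tan Q$-statement in Section~\ref{1sect}, using the standard local existence of integral objects through any prescribed point of the phase bundle, after which $\mathbf{\Gamma}$ takes the form \eqref{localsode2} and is therefore a SOPDE.
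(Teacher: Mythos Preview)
Your proposal is correct and follows essentially the same approach as the paper, which gives no explicit proof but simply declares the proposition a ``straightforward consequence of the above definitions and the results about {\sc sopde}s in the bundle $\oplus^k \Tan Q$ given in Section~\ref{1sect}''. Your coordinate computation makes this reduction explicit, and you even flag the one genuinely subtle point---the need for integral maps through arbitrary points in the converse direction---which the paper's Section~\ref{1sect} statement handles by explicitly assuming integrability of~$\mathbf{\Gamma}$.
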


Now we can state the Lagrangian formalism of field theories with dissipation.

\begin{dfn}
    \label{lagrangean}
    A \textbf{Lagrangian function} 
    is a function $\L\in\Cinfty(\oplus^k \Tan Q\times\R^k)$.
    
    \noindent The \textbf{Lagrangian energy}
    associated with $\L$ is the function $E_\L:=\Delta(\L)-\L\in\Cinfty(\oplus^k \Tan Q\times\R^k)$.
    
    \noindent The \textbf{Cartan forms}
    associated with $\L$ are
    \begin{equation*}
\theta_\L^\alpha={}^t(J^\alpha)\circ\d\L \in\df^1(\oplus^k \Tan Q\times\R^k)
\quad , \quad
\omega_\L^\alpha=-\d\theta_\L^\alpha\in\df^2(\oplus^k \Tan Q\times\R^k)\ .
    \end{equation*}
    Finally, we can define the forms
    $$
    \eta_\L^\alpha=\d s^\alpha-\theta_\L^\alpha\in\Omega^1(\oplus^k \Tan Q\times\R^k) \quad ,\quad \d\eta_\L^\alpha=\omega_\L^\alpha\in\Omega^2(\oplus^k \Tan Q\times\R^k) \ .
    $$
The couple $(\oplus^k \Tan Q\times\R^k,\L)$ is said to be a \textbf{$k$-contact Lagrangian system}.
\end{dfn}

In natural coordinates $(q^i,v^i_\alpha,s^\alpha)$ of $\oplus^k \Tan Q\times\R^k$,
the local expressions of these elements are
$$
E_L=v^i_\alpha\frac{\partial\L}{\partial v^i_\alpha}-\L
\quad ,\quad
\eta_\L^\alpha=\d s^\alpha-\frac{\partial\L}{\partial v^i_\alpha}\d q^i\ .
$$

Before introducing the Legendre map, remember that, given a bundle map $f \colon E \to F$
between two vector bundles over a manifold~$B$,
the fibre derivative of~$f$ is the map
$\mathcal{F}f \colon  E \to \mathrm{Hom}(E,F) \approx F \otimes E^*$
obtained by restricting $f$ to the fibres, $f_b \colon E_b \to F_b$,
and computing the usual derivative of a map between two vector spaces:
$\mathcal{F}f(e_b) = \mathrm{D} f_b(e_b)$.
This applies in particular when the second vector bundle is trivial of rank~1,
that is, for a function $f \colon E \to \R$;
then $\mathcal{F}f \colon E \to E^*$.
This map also has a fibre derivative $\mathcal{F}^2 f \colon E \to E^* \otimes E^*$,
which is usually called the fibre Hessian of~$f$. For every $e_b \in E$,
$\mathcal{F}^2 f(e_b)$ can be considered as a symmetric bilinear form on $E_b$.
It is easy to check that $\mathcal{F}f$ is a local diffeomorphism 
at a point $e \in E$ if, and only if, the Hessian $\mathcal{F}^2f(e)$ is non-degenerate.
(See \cite{Gracia2000} for details).

\begin{dfn}
The \textbf{Legendre map}  associated with a Lagrangian $\L\in\Cinfty(\oplus^k \Tan Q\times\R^k)$
is the fibre derivative of~$\L$, considered as a function on the vector bundle $\oplus^k \Tan Q\times\R^k \to Q \times \R^k$;
that is, the map
${\cal F}\L\colon\oplus^k \Tan Q\times\R^k\to\oplus^k \Tan^*Q\times\R^k$
given by
$$
{\cal F}\L ({v_1}_q,\ldots , {v_k}_q;s^\alpha) =
 \left({\cal F}\L(\cdot,s^\alpha) ({v_1}_q,\ldots , {v_k}_q),s^\alpha \right)
\ ; \quad
({v_1}_q,\ldots , {v_k}_q)\in \oplus^k \Tan Q\ ,
$$
where $\L(\cdot,s^\alpha)$ denotes the Lagrangian with $s^\alpha$ freezed.
\end{dfn}

This map is locally given by
$\displaystyle{\cal F}\L(q^i,v^i_\alpha,s^\alpha)=\Big(q^i,\frac{\partial \L}{\partial v^i_\alpha},s^\alpha\Big)$.

\begin{obs}\rm
The Cartan forms can also be defined as
$$
\theta_\L^\alpha={\cal FL}^{\;*}\theta^\alpha
\,,\quad
\omega_\L^\alpha={\cal FL}^{\;*}\omega^\alpha
\, ,
$$
where $\theta^\alpha$ and $\omega^\alpha$ are given in \eqref{cartforms}.
\end{obs}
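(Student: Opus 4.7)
The plan is to verify the two identities by reducing both to a local coordinate computation on the canonical chart $(q^i, v^i_\alpha, s^\alpha)$, since the objects involved are globally defined and the canonical 1-forms on $\oplus^k \Tan^*Q$ have a well-known local description. Once the first identity $\theta_\L^\alpha = {\cal FL}^*\theta^\alpha$ is established, the second one follows automatically from the compatibility of pullback with the exterior differential: $\omega_\L^\alpha = -\d\theta_\L^\alpha = -\d({\cal FL}^*\theta^\alpha) = {\cal FL}^*(-\d\theta^\alpha) = {\cal FL}^*\omega^\alpha$.

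For the first identity I will compute both sides separately in the natural coordinates. On the one hand, by the local expression of the Legendre map, ${\cal FL}(q^i,v^i_\alpha,s^\alpha)=(q^i,\partial \L/\partial v^i_\alpha,s^\alpha)$, and since the canonical 1-form on $\oplus^k\Tan^*Q$ reads $\theta^\alpha = p^\alpha_i\,\d q^i$, pulling back gives
$$
{\cal FL}^*\theta^\alpha = \frac{\partial \L}{\partial v^i_\alpha}\,\d q^i.
$$
On the other hand, using the local form $J^\alpha = \partial/\partial v^i_\alpha \otimes \d q^i$, the transpose $^t(J^\alpha)$ acts on a 1-form by precomposition with $J^\alpha$ on tangent vectors. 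Evaluating on an arbitrary vector $X = X^i\partial/\partial q^i + Y^i_\beta\partial/\partial v^i_\beta + Z^\beta\partial/\partial s^\beta$, one has $J^\alpha(X) = X^i\partial/\partial v^i_\alpha$, so $\d\L(J^\alpha(X)) = (\partial\L/\partial v^i_\alpha)X^i$, which is precisely $((\partial\L/\partial v^i_\alpha)\,\d q^i)(X)$. Hence $\theta_\L^\alpha = (\partial\L/\partial v^i_\alpha)\,\d q^i$, matching the previous expression.

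This settles the equality $\theta_\L^\alpha = {\cal FL}^*\theta^\alpha$; applying $-\d$ and commuting it past the pullback yields $\omega_\L^\alpha = {\cal FL}^*\omega^\alpha$. I do not anticipate any real obstacle: the computation is essentially routine once the local expressions of $J^\alpha$, of $\theta^\alpha$, and of ${\cal FL}$ have been recalled. The only point requiring a minor care is the action of the transpose $^t(J^\alpha)$ on $\d\L$, where one must check that the terms of $\d\L$ along $\d v^i_\beta$ and $\d s^\beta$ drop out because $J^\alpha$ takes values in the $v$-vertical distribution annihilated by them. If a coordinate-free argument is preferred, one can instead invoke the characterization of the canonical 1-form $\theta$ on $\Tan^*Q$ by the universal property $\beta^*\theta = \beta$ for any 1-form $\beta\colon Q\to \Tan^*Q$, applied componentwise to the $\alpha$-th factor of ${\cal FL}$.
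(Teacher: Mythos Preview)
Your proposal is correct. The paper itself does not supply a proof of this Remark: it is stated as an observation, relying on the local expressions already displayed just above it (namely $\eta_\L^\alpha=\d s^\alpha-(\partial\L/\partial v^i_\alpha)\,\d q^i$, hence $\theta_\L^\alpha=(\partial\L/\partial v^i_\alpha)\,\d q^i$) together with the coordinate form of the Legendre map and of $\theta^\alpha=p^\alpha_i\,\d q^i$. Your argument simply makes this implicit coordinate verification explicit, and your derivation of the second identity via naturality of $\d$ under pullback is the standard step; there is nothing to compare against beyond that.
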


\begin{prop}
\label{Prop-regLag}
For a Lagrangian function $\L$ the following conditions are equivalent:
\begin{enumerate}
\item
The Legendre map
${\cal FL}$ is a local diffeomorphism.
\item
The fibre Hessian
$
{\cal F}^2\L \colon
\oplus^k\Tan Q\times\R^k \longrightarrow (\oplus^k\Tan^*Q\times\R^k)\otimes (\oplus^k\Tan^*Q\times\R^k)
$
of~$\L$ is everywhere nondegenerate.
(The tensor product is of vector bundles over $Q \times \R^k$.)
\item
$(\oplus^k\Tan Q\times\R^k,\eta_\L^\alpha)$ is a $k$-contact manifold.
\end{enumerate}
\end{prop}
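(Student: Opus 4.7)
The plan is to leverage the fact (noted in the remark just before the proposition) that the Cartan forms are pullbacks under the Legendre map, ${\cal FL}^*\eta^\alpha=\eta_\L^\alpha$, where $\eta^\alpha=\d s^\alpha-\theta^\alpha$ is the canonical $k$-contact structure on $(\oplus^k\Tan^*Q)\times\R^k$ of Example~\ref{example-canmodel}, and to combine this with a direct coordinate calculation.

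First I would dispatch the equivalence (1)$\Leftrightarrow$(2). The map ${\cal FL}\colon\oplus^k\Tan Q\times\R^k\to\oplus^k\Tan^*Q\times\R^k$ is fibre-preserving over $Q\times\R^k$ (it fixes the $q^i$ and the $s^\alpha$), and its restriction to each fibre is precisely the fibre derivative of $\L(\cdot,s^\alpha)$, which is a function on the vector bundle $\oplus^k\Tan Q\times\R^k\to Q\times\R^k$. By the inverse function theorem applied fibrewise (or, equivalently, by the block-triangular form of the Jacobian in the coordinates $(q^i,v^i_\alpha,s^\alpha)$), ${\cal FL}$ is a local diffeomorphism if and only if its fibre derivative, namely the fibre Hessian $\mathcal{F}^2\L$, is nondegenerate at every point.

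Next I would prove (1)$\Rightarrow$(3). Since the canonical 1-forms $\eta^\alpha$ on $(\oplus^k\Tan^*Q)\times\R^k$ define a $k$-contact structure (Example~\ref{example-canmodel}) and $\eta_\L^\alpha={\cal FL}^*\eta^\alpha$, the fact that ${\cal FL}$ is a local diffeomorphism lets one transfer each of the three defining properties of Definition~\ref{kconman} pointwise: the tangent isomorphism $\Tan_p{\cal FL}$ identifies $\mathcal{D}^{\mathrm{C}}_\L$ with $\mathcal{D}^{\mathrm{C}}$ and $\mathcal{D}^{\mathrm{R}}_\L$ with $\mathcal{D}^{\mathrm{R}}$, so the corank of $\mathcal{D}^{\mathrm{C}}_\L$ equals $k$, the rank of $\mathcal{D}^{\mathrm{R}}_\L$ equals $k$, and their intersection is trivial.

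For (3)$\Rightarrow$(2) I would argue by contrapositive in coordinates. Writing $W^{\alpha\beta}_{ij}=\partial^2\L/\partial v^i_\alpha\partial v^j_\beta$ and
$$
\d\eta_\L^\alpha=-\frac{\partial^2\L}{\partial q^j\partial v^i_\alpha}\,\d q^j\wedge\d q^i
-W^{\alpha\beta}_{ji}\,\d v^j_\beta\wedge\d q^i
-\frac{\partial^2\L}{\partial s^\beta\partial v^i_\alpha}\,\d s^\beta\wedge\d q^i\,,
$$
suppose $\mathcal{F}^2\L$ is degenerate at a point $p$; pick a nontrivial null vector $(a^i_\alpha)$, i.e.\ $W^{\alpha\beta}_{ji}a^j_\beta=0$ for all $\alpha,i$. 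The $\tau^1_Q$-vertical vector $V=a^i_\alpha(\partial/\partial v^i_\alpha)|_p$ then satisfies $\eta_\L^\beta(V)=0$ (hence $V\in\mathcal{D}^{\mathrm{C}}_\L$), while inserting $V$ into the above expression for $\d\eta_\L^\beta$ kills the first and third terms and yields $-W^{\beta\gamma}_{ji}a^j_\gamma\,\d q^i=0$ (hence $V\in\mathcal{D}^{\mathrm{R}}_\L$). Thus $0\neq V\in\mathcal{D}^{\mathrm{C}}_\L\cap\mathcal{D}^{\mathrm{R}}_\L$, contradicting condition~(iii) of Definition~\ref{kconman}.

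The main obstacle is the bookkeeping for the (3)$\Rightarrow$(2) step: one must expand $\d\eta_\L^\alpha$ fully and identify the right null vector, and also (if one prefers a direct proof rather than pullback for (2)$\Rightarrow$(3)) verify that when $W^{\alpha\beta}_{ij}$ is nondegenerate the system defining $\mathcal{D}^{\mathrm{R}}_\L$ forces the horizontal components $A^i$ to vanish and parametrizes $\mathcal{D}^{\mathrm{R}}_\L$ by the $k$ free constants $C^\gamma$, giving rank exactly $k$. Using the pullback viewpoint of Example~\ref{example-canmodel} sidesteps this bookkeeping for (2)$\Rightarrow$(3) and is the cleanest route.
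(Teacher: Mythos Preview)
Your argument is correct and more fully articulated than the paper's own proof, which is essentially a one-line sketch: the authors simply record the coordinate expressions of ${\cal FL}$ and ${\cal F}^2\L$ and assert that all three conditions amount to the nonsingularity of the Hessian matrix $W=(W^{\alpha\beta}_{ij})$, leaving the reader to check condition~(3) in coordinates.

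Your route differs in one respect worth noting. For (1)$\Rightarrow$(3) you use the structural fact that $\eta_\L^\alpha={\cal FL}^*\eta^\alpha$ pulls back the canonical $k$-contact structure of Example~\ref{example-canmodel}, so a local diffeomorphism transports the three defining distributions isomorphically. The paper does not invoke this pullback argument; it implicitly relies on a direct coordinate verification of Definition~\ref{kconman} for $\eta_\L^\alpha$. Your approach is cleaner for this direction and makes transparent why regularity is exactly what is needed. Conversely, the paper's purely coordinate viewpoint would, if spelled out, yield both implications (2)$\Leftrightarrow$(3) at once by solving the linear systems for $\mathcal{D}^{\mathrm{C}}_\L$ and $\mathcal{D}^{\mathrm{R}}_\L$; your contrapositive for (3)$\Rightarrow$(2) is a targeted version of that same computation and is entirely adequate.
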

\begin{proof}
The proof can be easily done using natural coordinates, bearing in mind that
\beann
\displaystyle F\L(q^i,v^i_\alpha,s^\alpha)=\Big(q^i,\frac{\partial \L}{\partial v^i_\alpha},s^\alpha\Big) &,&
\\
{\cal F}^2 \L(q^i,v^i_\alpha,s^\alpha) = (q^i,W_{ij}^{\alpha\beta},s^\alpha)
&,&
\hbox{with}\ 
W_{ij}^{\alpha\beta}= 
\left( \frac{\partial^2\L}{\partial v^i_\alpha\partial v^j_\beta}\right)
\,.
\eeann
Then the conditions in the proposition 
mean that the matrix 
$W= (W_{ij}^{\alpha\beta})$ 
is everywhere nonsingular.
\end{proof}

\begin{dfn}
A Lagrangian function $\L$ is said to be \textbf{regular} if the equivalent
conditions in Proposition \ref{Prop-regLag} hold.
Otherwise $\L$ is called a \textbf{singular} Lagrangian.
In particular, 
$\L$ is said to be \textbf{hyperregular} 
if ${\cal FL}$ is a global diffeomorphism.
\end{dfn}

Given a regular $k$-contact Lagrangian system $(\oplus^k\Tan Q\times\R^k,\L)$,
from \eqref{reebcontact} we have that
the \textbf{Reeb vector fields} $(\Reeb_\L)_\alpha\in\X(\oplus^k \Tan Q\times\R^k)$ 
for this system are the unique solution to
$$
\inn((\Reeb_\L)_\alpha)\diff\eta^\beta_\L=0\quad ,\quad
\inn((\Reeb_\L)_\alpha)\eta^\beta_\L=\delta_\alpha^\beta \ .
$$
If $\L$ is regular, then there exists the inverse 
$W^{ij}_{\alpha\beta}$ of the Hessian matrix,
namely $\displaystyle W^{ij}_{\alpha\beta}\frac{\partial^2\L}{\partial v^j_\beta \partial v^k_\gamma}=\delta^i_k\delta^\gamma_\alpha$,
and then a simple calculation in coordinates leads to
$$
(\Reeb_\L)_\alpha=\frac{\partial}{\partial s^\alpha}-W^{ji}_{\gamma\beta}\frac{\partial^2\L}{\partial s^\alpha\partial v^j_\gamma}\,\frac{\partial}{\partial v^i_\beta} \ .
$$

\subsection{The $k$-contact Euler--Lagrange equations}

As a result of the preceding definitions and results,
every \emph{regular} contact Lagrangian system 
has associated the $k$-contact Hamiltonian system
$(\oplus^k\Tan Q\times\R, \eta_\L^\alpha, E_\L)$. Then:

\begin{dfn}
Let $(\oplus^k \Tan Q\times\R^k,\L)$ be a $k$-contact Lagrangian system.

\noindent The \textbf{$k$-contact Euler--Lagrange equations} for a holonomic maps 
$\mbox{\boldmath $\sigma$}\colon\R^k\to\oplus^k \Tan Q\times\R^k$ are
\begin{equation}
\label{ELkcontact}
 \begin{cases}
\inn(\mbox{\boldmath $\sigma$}'_\alpha)\d\eta_\L^\alpha=
 \Big(\d E_\L-(\Lie_{(\Reeb_\L)_\alpha}E_\L)\eta_\L^\alpha\Big)\circ\mbox{\boldmath $\sigma$} \ ,\\
\inn(\mbox{\boldmath $\sigma$}'_\alpha)\eta_\L^\alpha=
-E_\L\circ\mbox{\boldmath$\sigma$} \ .
\end{cases}
\end{equation}
The \textbf{$k$-contact Lagrangian equations} for a $k$-vector field
${\bf X}_\L=((X_\L)_\alpha)$ in $\oplus^k \Tan Q\times\R^k$ are
\beq \begin{cases}
\inn((X_\L)_\alpha)\d\eta_\L^\alpha=\d E_\L- (\Lie_{(\Reeb_\L)_\alpha}E_\L)\eta_\L^\alpha \ ,\\
\inn((X_\L)_\alpha)\eta_\L^\alpha=-E_\L \ .
    \end{cases}
    \label{fieldLcontact}
\eeq
A $k$-vector field which is solution to these equations is called a
\textbf{Lagrangian $k$-vector field}.
\end{dfn}

A first relevant result is:

\begin{prop}
Let $(\oplus^k \Tan Q\times\R^k,\L)$ be a $k$-contact regular Lagrangian system.
Then, the $k$-contact Euler--Lagrange equations \eqref{fieldLcontact} admit solutions.
They are not unique if $k>1$.
\end{prop}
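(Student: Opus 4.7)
The plan is to reduce the statement to the general existence result for $k$-contact Hamiltonian systems that was recorded in Section~\ref{kcontact}. Since $\L$ is regular, Proposition~\ref{Prop-regLag} tells us that $(\oplus^k\Tan Q\times\R^k,\eta_\L^\alpha)$ is a $k$-contact manifold, so $(\oplus^k\Tan Q\times\R^k,\eta_\L^\alpha,E_\L)$ is a bona fide $k$-contact Hamiltonian system. The equations \eqref{fieldLcontact} are then precisely the Hamilton--de Donder--Weyl equations \eqref{fieldcontact} for this system, whose solvability is guaranteed by the general statement made right after \eqref{eq:kcontact3}.

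To make existence explicit and obtain non-uniqueness at the same time, I would rework the argument in natural coordinates. Writing the unknown $k$-vector field as
$$
(X_\L)_\alpha = A_\alpha^\beta\,\frac{\partial}{\partial s^\beta} + B_\alpha^i\,\frac{\partial}{\partial q^i} + C_{\alpha\beta}^i\,\frac{\partial}{\partial v_\beta^i},
$$
and expanding both sides of the first equation in \eqref{fieldLcontact} in the cobasis $(\d q^i,\d v_\alpha^i,\d s^\alpha)$ using $\eta_\L^\alpha = \d s^\alpha-(\partial\L/\partial v_\alpha^i)\,\d q^i$ and the coordinate form of $(\Reeb_\L)_\alpha$ given in Section~\ref{klagft}, the $\d v_\beta^i$-components combined with regularity (invertibility of the Hessian $W_{ij}^{\alpha\beta}=\partial^2\L/\partial v_\alpha^i\partial v_\beta^j$) force the \textsc{sopde}-type relation $B_\alpha^i=v_\alpha^i$; the $\d q^i$-components yield Euler--Lagrange-like relations that involve only the trace $\sum_\alpha C_{\alpha\alpha}^i$; and the second equation of \eqref{fieldLcontact} fixes only the trace $\sum_\alpha A_\alpha^\alpha$. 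The resulting linear system is therefore always consistent, giving existence.

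For non-uniqueness, observe that for $k>1$ the off-diagonal components $A_\alpha^\beta$ and $C_{\alpha\beta}^i$ with $\alpha\neq\beta$, as well as the way the pinned-down traces are distributed among their diagonal entries, are left completely unconstrained by the above equations. Hence the solution set is an affine subspace of positive dimension, so solutions are not unique when $k>1$.

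The main obstacle is the bookkeeping: one must track carefully which linear combinations of the unknown components each decomposition of each equation actually controls, and check that no further constraints appear. A clean way to avoid getting lost is to pass through the Legendre map to Darboux coordinates, where the computation reduces literally to \eqref{coor2}; there it is immediate that only the traces over $\alpha$ of $X_\alpha^\beta$ and $X_{\alpha i}^\beta$ are pinned down, so the off-diagonal parameters are manifestly free for $k>1$ while the equations remain consistent.
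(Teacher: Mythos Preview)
Your proposal is correct and follows essentially the same route as the paper, whose proof is a one-line reference to Proposition~4.3 of \cite{GGMRR-2020}: once regularity makes $(\oplus^k\Tan Q\times\R^k,\eta_\L^\alpha,E_\L)$ a $k$-contact Hamiltonian system, the Lagrangian equations \eqref{fieldLcontact} are literally \eqref{fieldcontact}, and the general existence/non-uniqueness statement recorded after \eqref{eq:kcontact3} applies.

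One small imprecision worth fixing in your coordinate sketch: in the natural $(q,v,s)$ coordinates the $\d q^i$-component does \emph{not} involve only the bare trace $\sum_\alpha C_{\alpha\alpha}^i$; what appears is the Hessian contraction $W_{ij}^{\alpha\beta}C_{\alpha\beta}^j$ together with an $A$-dependent term (compare \eqref{A-E-L-eqs3}). This does not spoil the argument---the system remains linear, consistent (since $W$ is invertible), and underdetermined for $k>1$---and your Legendre-map alternative to Darboux coordinates, where \eqref{coor2} indeed pins down only the $\alpha$-traces $X_{\alpha i}^\alpha$ and $X_\alpha^\alpha$, is the clean way to see both consistency and the free off-diagonal parameters.
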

\begin{proof}
The proof is the same as that of Proposition 4.3 in \cite{GGMRR-2020}.
\end{proof}

In a natural chart of coordinates of $\oplus^k \Tan Q\times\R^k$, equations \eqref{ELkcontact} read
\beq
\label{ELeqs1}
\frac{\partial}{\partial t^\alpha}
\left(\frac{\displaystyle\partial \L}{\partial
v^i_\alpha}\circ{\mbox{\boldmath $\sigma$}}\right)=
\left(\frac{\partial \L}{\partial q^i}+
\displaystyle\frac{\partial\L}{\partial s^\alpha}\displaystyle\frac{\partial\L}{\partial v^i_\alpha}\right)\circ{\mbox{\boldmath $\sigma$}}
 \quad  , \quad
\derpar{s^\alpha}{t^\alpha}=\L\circ{\mbox{\boldmath $\sigma$}} \ ,
\eeq
meanwhile, for a $k$-vector field ${\bf X}_\L=((X_\L)_\alpha)$ with
$\displaystyle (X_\L)_\alpha= 
(X_\L)_\alpha^i\frac{\displaystyle\partial}{\displaystyle
\partial q^i}+(X_\L)_{\alpha\beta}^i\frac{\displaystyle\partial}{\displaystyle\partial v^i_\beta}+(X_\L)_\alpha^\beta\,\frac{\partial}{\partial s^\beta}$,
the Lagrangian equations \eqref{fieldLcontact} are
\bea
0 &=&
\displaystyle \Big((X_\L)_\alpha^j-v^j_\alpha\Big)
\frac{\partial^2\L}{\partial v^j_\alpha\partial s^\beta} \ ,
\label{A-E-L-eqs2}
\\
0 &=&
\displaystyle \Big((X_\L)_\alpha^j-v^j_\alpha\Big)
\frac{\partial^2\L}{\partial v^i_\beta\partial v^j_\alpha}
\label{A-E-L-eqs1} \ ,
\\
0 &=&
\displaystyle
\Big((X_\L)_\alpha^j-v^j_\alpha\Big)
\frac{\partial^2\L}{\partial q^i\partial v^j_\alpha}
+\frac{\partial\L}{\partial q^i}
-\frac{\partial^2\L}{\partial s^\beta\partial v^i_\alpha}(X_\L)_\alpha^\beta
\nonumber
\\ & &
-\frac{\partial^2\L}{\partial q^j \partial v^i_\alpha}(X_\L)_\alpha^j
-\frac{\partial^2\L}{\partial v^j_\beta\partial v^i_\alpha}(X_\L)_{\alpha\beta}^j
+\frac{\partial\L}{\partial s^\alpha}
\frac{\partial\L}{\partial v^i_\alpha}\ ,
\label{A-E-L-eqs3}
\\
0 &=&
\displaystyle \L + 
\frac{\partial\L}{\partial v^i_\alpha}\Big((X_\L)_\alpha^j-v^j_\alpha\Big)-(X_\L)_\alpha^\alpha\ .
\label{A-E-L-eqs4}
\eea
If $\L$ is a regular Lagrangian, equations \eqref{A-E-L-eqs1}
lead to $v^i_\alpha=(X_\L)_\alpha^i$, which
are the {\sc sopde} condition for the $k$-vector field ${\bf X}$.
Then, \eqref{A-E-L-eqs2} holds identically, and
\eqref{A-E-L-eqs4} and \eqref{A-E-L-eqs3} give 
\beann
(X_\L)_\alpha^\alpha&=& \L\ ,
\\
\displaystyle
-\frac{\partial\L}{\partial q^i}
+\frac{\partial^2\L}{\partial s^\beta\partial v^i_\alpha}(X_\L)_\alpha^\beta
+\frac{\partial^2\L}{\partial q^j \partial v^i_\alpha}v_\alpha^j
+\frac{\partial^2\L}{\partial v^j_\beta\partial v^i_\alpha}(X_\L)_{\alpha\beta}^j
&=&\frac{\partial\L}{\partial s^\alpha}
\frac{\partial\L}{\partial v^i_\alpha}\ .
\eeann
Notice that, if this {\sc sopde} ${\bf X}_\L$ is integrable,
these last equations are the Euler--Lagrange equations
\eqref{ELeqs1} for its integral maps.
In this way, we have proved that:

\begin{prop}
If $\L$ is a regular Lagrangian, then the corresponding
Lagrangian $k$-vector fields ${\bf X}_\L$
(solutions to the $k$-contact Lagrangian equations \eqref{fieldLcontact}) are 
{\sc sopde}'s and if, in addition, ${\bf X}_\L$ is integrable, then
its integral maps are solutions to the $k$-contact Euler--Lagrange field equations \eqref{ELkcontact}.

This {\sc sopde} ${\bf X}_\L\equiv\mbox{\boldmath $\Gamma$}_\L$ is called the \textbf{Euler--Lagrange $k$-vector field} 
associated with the Lagrangian function $\L$.
\end{prop}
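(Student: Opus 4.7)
The plan is to work entirely in natural coordinates $(q^i, v^i_\alpha, s^\alpha)$ on $\oplus^k \Tan Q \times \R^k$, expand the geometric $k$-contact Lagrangian equations \eqref{fieldLcontact} into a coordinate system, and then invoke the regularity hypothesis (nondegeneracy of the fibre Hessian $W_{ij}^{\alpha\beta} = \partial^2\L/\partial v^i_\alpha\partial v^j_\beta$) to extract the \textsc{sopde} condition. The paragraphs preceding the proposition already display the four families of local equations \eqref{A-E-L-eqs2}--\eqref{A-E-L-eqs4}; my task is to derive them intrinsically and then deduce both assertions.

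First I would write a generic $k$-vector field $(X_\L)_\alpha$ with components $(X_\L)^i_\alpha$, $(X_\L)^i_{\alpha\beta}$, $(X_\L)^\beta_\alpha$ along $\partial_{q^i}$, $\partial_{v^i_\beta}$ and $\partial_{s^\beta}$ respectively, and compute $\eta_\L^\alpha$, $\d\eta_\L^\alpha$, $\d E_\L$ and $\Lie_{(\Reeb_\L)_\alpha}E_\L$ from the coordinate expressions given in Definition~\ref{lagrangean}. Substituting into \eqref{fieldLcontact} and separating according to the basis 1-forms $\d q^i$, $\d v^i_\beta$ and $\d s^\alpha$ yields the four local conditions \eqref{A-E-L-eqs2}--\eqref{A-E-L-eqs4}. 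These are equivalent to \eqref{fieldLcontact}.

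The key step is now to extract the \textsc{sopde} condition $(X_\L)^j_\alpha = v^j_\alpha$ from \eqref{A-E-L-eqs1}. This equation reads
\[
\bigl((X_\L)^j_\alpha - v^j_\alpha\bigr)\, W^{\beta\alpha}_{ij} = 0 \quad \text{for every } (i,\beta),
\]
where the sum runs over $(j,\alpha)$. Viewing $W^{\beta\alpha}_{ij}$ as a square matrix with rows indexed by $(i,\beta)$ and columns by $(j,\alpha)$, the regularity of $\L$ (Proposition~\ref{Prop-regLag}) guarantees that this matrix is nonsingular at every point, so the only solution is $(X_\L)^j_\alpha = v^j_\alpha$. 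By the coordinate characterisation of \textsc{sopde}s (namely $J^\alpha(\Gamma_\alpha)=\Delta$) given in Section~\ref{1sect}, this is precisely the \textsc{sopde} property. With that in hand, \eqref{A-E-L-eqs2} becomes trivial, \eqref{A-E-L-eqs4} reduces to $(X_\L)^\alpha_\alpha = \L$, and \eqref{A-E-L-eqs3} becomes the intrinsic Euler--Lagrange--type equation satisfied by the coefficients $(X_\L)^i_{\alpha\beta}$.

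For the second assertion, assume ${\bf X}_\L$ is integrable and let $\boldsymbol\sigma\colon\R^k \to \oplus^k\Tan Q \times \R^k$ be an integral map. By Proposition~\ref{lem0} the \textsc{sopde} property implies that $\boldsymbol\sigma = (\phi', s^\alpha)$ is holonomic, so $\partial\phi^i/\partial t^\alpha = v^i_\alpha\circ\boldsymbol\sigma$. Inserting this into the reduced forms of \eqref{A-E-L-eqs3} and \eqref{A-E-L-eqs4}, one recognises the total derivative $\partial_{t^\alpha}\bigl((\partial\L/\partial v^i_\alpha)\circ\boldsymbol\sigma\bigr)$ on the left-hand side, giving exactly the $k$-contact Euler--Lagrange field equations \eqref{ELeqs1}, equivalent to \eqref{ELkcontact}. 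The only delicate point in the whole argument is the bookkeeping of indices in showing that the contraction of $(X_\L)^j_\alpha - v^j_\alpha$ with the Hessian really does constitute a nonsingular linear system, which requires reading $W$ correctly as a bundle isomorphism on $\oplus^k\Tan Q$; everything else is routine computation.
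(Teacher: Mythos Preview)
Your proposal is correct and follows exactly the same route as the paper: the paper's proof is precisely the coordinate computation displayed in the paragraphs immediately preceding the proposition, deriving \eqref{A-E-L-eqs2}--\eqref{A-E-L-eqs4}, using the nonsingularity of the Hessian $W^{\alpha\beta}_{ij}$ in \eqref{A-E-L-eqs1} to force $(X_\L)^j_\alpha = v^j_\alpha$, and then observing that the reduced equations become \eqref{ELeqs1} on integral maps. Your write-up is in fact slightly more explicit than the paper's about why regularity turns \eqref{A-E-L-eqs1} into a nonsingular linear system, and about invoking Proposition~\ref{lem0} for holonomy of the integral maps, but the argument is the same.
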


\begin{obs}{\rm
It is interesting to point out how, in the Lagrangian formalism of dissipative field theories,
the second equation in \eqref{ELeqs1} relates the variation 
of the ``dissipation coordinates'' $s^\alpha$ to the Lagrangian function.
}
\end{obs}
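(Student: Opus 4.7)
Strictly speaking, this is a remark rather than a theorem, so my ``proof'' will consist of verifying that the second of equations \eqref{ELeqs1}, namely $\partial s^\alpha/\partial t^\alpha = \L\circ\mbox{\boldmath $\sigma$}$, indeed follows from the intrinsic $k$-contact Lagrangian formalism, and then exhibiting the claimed coupling between the dissipation coordinates and~$\L$. My plan is to derive this equation directly from the second equation of the $k$-contact Euler--Lagrange system \eqref{ELkcontact}, namely $\inn(\mbox{\boldmath $\sigma$}'_\alpha)\eta_\L^\alpha = -E_\L\circ\mbox{\boldmath $\sigma$}$, and then read off the interpretation.

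First I would write a holonomic map as $\mbox{\boldmath $\sigma$}=(\phi,s^\alpha)$, with $\phi\colon\R^k\to Q$, so that in natural coordinates its first prolongation reads
$$
\mbox{\boldmath $\sigma$}'_\alpha = \frac{\partial\phi^i}{\partial t^\alpha}\frac{\partial}{\partial q^i} + \frac{\partial^2\phi^i}{\partial t^\alpha\partial t^\beta}\frac{\partial}{\partial v^i_\beta} + \frac{\partial s^\beta}{\partial t^\alpha}\frac{\partial}{\partial s^\beta}.
$$
Using the local form $\eta_\L^\alpha = \d s^\alpha - (\partial\L/\partial v^i_\alpha)\,\d q^i$ and the holonomy identity $v^i_\alpha\circ\mbox{\boldmath $\sigma$} = \partial\phi^i/\partial t^\alpha$, a direct contraction and sum on $\alpha$ give
$$
\inn(\mbox{\boldmath $\sigma$}'_\alpha)\eta_\L^\alpha = \frac{\partial s^\alpha}{\partial t^\alpha} - \left(v^i_\alpha\frac{\partial\L}{\partial v^i_\alpha}\right)\circ\mbox{\boldmath $\sigma$} = \frac{\partial s^\alpha}{\partial t^\alpha} - (\Delta\L)\circ\mbox{\boldmath $\sigma$}.
$$
Equating this with $-E_\L\circ\mbox{\boldmath $\sigma$} = -(\Delta\L-\L)\circ\mbox{\boldmath $\sigma$}$ and cancelling the $\Delta\L$ contributions yields the desired identity $\partial s^\alpha/\partial t^\alpha = \L\circ\mbox{\boldmath $\sigma$}$, which is exactly the second equation in \eqref{ELeqs1}.

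The interpretation advertised by the remark is then immediate: along any holonomic solution, the trace $\partial s^\alpha/\partial t^\alpha$ of the Jacobian of the vector of dissipation coordinates $(s^1,\ldots,s^k)$ coincides pointwise with the Lagrangian density, so the $s^\alpha$ play the role of a $k$-tuple of partial action densities whose combined rate of growth is prescribed by $\L$. This extends to field theory the mechanical identity $\dot{s}=L$ familiar from contact Lagrangian mechanics (the case $k=1$), and it makes transparent that dissipation enters precisely by tying the evolution of the $s^\alpha$ to $\L$ through the $k$-contact one-forms $\eta_\L^\alpha = \d s^\alpha - \theta_\L^\alpha$. No real obstacle arises here; the only subtlety is the systematic use of the holonomy condition (guaranteed by Proposition~\ref{lem0}) to identify $(\partial\L/\partial v^i_\alpha)(\partial\phi^i/\partial t^\alpha)$ with $\Delta(\L)\circ\mbox{\boldmath $\sigma$}$, so that the $E_\L$ on the right-hand side exactly cancels the $\Delta\L$ produced by $\theta_\L^\alpha$ and isolates $\L$ itself.
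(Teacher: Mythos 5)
Your derivation is correct and is precisely the coordinate computation that underlies the second equation of \eqref{ELeqs1} in the paper: contracting $\mbox{\boldmath $\sigma$}'_\alpha$ with $\eta_\L^\alpha=\d s^\alpha-\theta_\L^\alpha$, using holonomy to identify the $\theta_\L^\alpha$ contribution with $\Delta(\L)\circ\mbox{\boldmath $\sigma$}$, and cancelling it against the $\Delta(\L)$ inside $E_\L$. Since the statement is a remark rather than a theorem, there is no separate proof in the paper to diverge from; your verification matches the paper's route and the interpretation you give is the one intended.
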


\begin{obs}{\rm
If $\L$ is not regular then $(\oplus^k \Tan Q\times\R^k,\eta_\L^\alpha,E_\L)$
is a $k$-precontact system and, in general,
equations \eqref{ELkcontact} and \eqref{fieldLcontact} have no
solutions everywhere in $\oplus^k \Tan Q\times\R^k$ but, in the most favourable situations,
they do in a submanifold of $\oplus^k \Tan Q\times\R^k$ which is obtained by applying a suitable constraint algorithm.
Nevertheless, solutions to equations \eqref{fieldLcontact}
are not necessarily {\sc sopde}
(unless it is required as the additional condition
$J^\alpha(X_\alpha)=\Delta$) and,
as a consequence, if they are integrable, 
their integral maps are not necessarily holonomic.}
\end{obs}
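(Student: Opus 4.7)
The plan is to justify each of the three assertions in the remark. The first claim—that a singular Lagrangian yields a $k$-precontact rather than a $k$-contact system—follows immediately from the third equivalence in Proposition~\ref{Prop-regLag}: $(\oplus^k\Tan Q\times\R^k,\eta_\L^\alpha)$ is a $k$-contact manifold precisely when $\L$ is regular, so if $\L$ is singular then at least one condition of Definition~\ref{kconman} fails, which by the terminology fixed in the earlier remark is the definition of a $k$-precontact structure. The associated triple with $E_\L$ is then by definition a $k$-precontact Hamiltonian system, for which the Hamilton--De Donder--Weyl equations \eqref{fieldLcontact} still make sense (they do not depend on the choice of Reeb fields, as noted).

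For the second assertion, the idea is to run a constraint algorithm. Equations~\eqref{fieldLcontact} are $\Cinfty$-linear in the unknown $k$-vector field, and what regularity previously guaranteed was that the local system \eqref{A-E-L-eqs1}--\eqref{A-E-L-eqs4} admits a solution pointwise; in the singular case the kernel of the Hessian $W^{\alpha\beta}_{ij}$ is nontrivial, and contracting \eqref{A-E-L-eqs3}--\eqref{A-E-L-eqs4} with null vectors of $W$ yields algebraic compatibility conditions that cut out a primary constraint submanifold $M_1\subset\oplus^k\Tan Q\times\R^k$. Requiring that Lagrangian $k$-vector fields be tangent to $M_1$ produces a new set of (secondary) constraints, and iterating this procedure gives a nested sequence $\cdots\subset M_2\subset M_1$ which, in favourable cases, stabilises on a final constraint submanifold $M_f$ where solutions to \eqref{fieldLcontact} exist and are tangent to $M_f$. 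The main obstacle here is a rigorous formulation of the constraint algorithm for $k$-precontact systems; we merely indicate its existence, as the development parallels the Gotay--Nester--Hinds algorithm already adapted to the $k$-presymplectic and $k$-precosymplectic cases in~\cite{GRR-2019}.

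For the third and fourth claims, one revisits the coordinate equation~\eqref{A-E-L-eqs1}, which reads $\bigl((X_\L)^j_\alpha-v^j_\alpha\bigr)W^{\alpha\beta}_{ij}=0$. In the regular case, the invertibility of $W$ forced $(X_\L)^j_\alpha=v^j_\alpha$, which is exactly the coordinate form of the \textsc{sopde} condition $J^\alpha(X_\alpha)=\Delta$. When $W$ is degenerate the difference $(X_\L)^j_\alpha-v^j_\alpha$ is only constrained to lie in $\ker W$, so generically the \textsc{sopde} condition fails and must be imposed as an extra requirement if one wishes to recover it. Finally, if such a non-\textsc{sopde} solution ${\bf X}_\L$ happens to be integrable, any integral map $\psi=(\phi,v^i_\alpha,s^\alpha)$ satisfies $\partial\phi^i/\partial t^\alpha=(X_\L)^i_\alpha\circ\psi\neq v^i_\alpha\circ\psi$, so $\psi$ is not the first prolongation of its underlying section $\phi$ and hence, by Definition~\ref{de652}, is not holonomic. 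This justifies the final sentence of the remark.
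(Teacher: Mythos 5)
Your justification is correct and follows essentially the same route the paper itself relies on: the remark is stated without proof, but its content is read directly off the preceding coordinate computation, where invertibility of the Hessian is exactly what turns equation \eqref{A-E-L-eqs1} into the {\sc sopde} condition $(X_\L)^j_\alpha=v^j_\alpha$, and your kernel argument, together with the appeal to Proposition~\ref{Prop-regLag} for the $k$-precontact terminology and to the Gotay--Nester-type constraint algorithm of \cite{GRR-2019}, matches the paper's intent. The only minor imprecision is the claim that every integral map of a non-{\sc sopde} solution satisfies $(X_\L)^i_\alpha\circ\psi\neq v^i_\alpha\circ\psi$; the equality need only fail somewhere along the image, which is all that ``not necessarily holonomic'' requires.
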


\begin{obs}{\rm
Observe that the particular case $k=1$ gives the Lagrangian formalism
for mechanical systems with dissipation \cite{DeLeon2019,GGMRR-2019b}.
}\end{obs}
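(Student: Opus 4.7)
The plan is to verify that specialising every ingredient of Section~\ref{klagft} to a single index $\alpha=1$ reproduces, object by object, the contact Lagrangian formalism of \cite{DeLeon2019,GGMRR-2019b}. Since the remark is a consistency check, the proof amounts to an identification of structures rather than a new argument.

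First I would fix the phase bundle. For $k=1$ one has $\oplus^{1}\Tan Q\times\R^{1}=\Tan Q\times\R$, with natural coordinates $(q^i,v^i,s)$; this is precisely the extended velocity manifold used in contact Lagrangian mechanics, with $s$ playing the role of the ``action'' (or dissipation) variable. The canonical $k$-tangent structure $(J^\alpha)$ reduces to the single canonical endomorphism $J=\partial/\partial v^i\otimes \d q^i$, and the Liouville field $\Delta = v^i_\alpha\,\partial/\partial v^i_\alpha$ becomes the usual Liouville field $\Delta=v^i\,\partial/\partial v^i$. Consequently the {\sc sopde} condition $J^\alpha(\Gamma_\alpha)=\Delta$ becomes $J(\Gamma)=\Delta$, which is exactly the {\sc sode} condition on $\Tan Q\times\R$.

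Second, I would identify the contact objects. Given $\L\in\Cinfty(\Tan Q\times\R)$, the Cartan forms of Definition~\ref{lagrangean} collapse to the single 1-form $\theta_\L=(\partial\L/\partial v^i)\,\d q^i$, and hence
\begin{equation*}
\eta_\L=\d s-\theta_\L=\d s-\frac{\partial\L}{\partial v^i}\,\d q^i,
\end{equation*}
which is literally the Lagrangian contact form of \cite{DeLeon2019,GGMRR-2019b}. Regularity in the sense of Proposition~\ref{Prop-regLag} reduces to non-degeneracy of the Hessian $(\partial^2\L/\partial v^i\partial v^j)$, the familiar regularity condition in mechanics; and the Reeb vector field computed in Section~\ref{klagft} coincides with the known Reeb field of contact Lagrangian mechanics.

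Finally I would match the dynamics. The $k$-contact Lagrangian equations \eqref{fieldLcontact} become the pair
\begin{equation*}
\inn(X_\L)\,\d\eta_\L=\d E_\L-(\Lie_{\Reeb_\L}E_\L)\,\eta_\L,\qquad \inn(X_\L)\,\eta_\L=-E_\L,
\end{equation*}
which is the contact Hamiltonian form of the Herglotz dynamics, and the coordinate expressions~\eqref{ELeqs1} reduce, after suppressing the sum over $\alpha$, to the Herglotz--Euler--Lagrange system
\begin{equation*}
\frac{\d}{\d t}\!\left(\frac{\partial\L}{\partial v^i}\right)-\frac{\partial\L}{\partial q^i}=\frac{\partial\L}{\partial s}\,\frac{\partial\L}{\partial v^i},\qquad \frac{\d s}{\d t}=\L,
\end{equation*}
which are precisely the equations treated in the cited references. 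The only conceptual point worth stressing (and the nearest thing to a genuine obstacle) is that for $k=1$ the solution $k$-vector field becomes a single vector field, so the non-uniqueness observed in Proposition above for $k>1$ disappears and one recovers the unique Herglotz dynamical vector field of the regular contact Lagrangian case.
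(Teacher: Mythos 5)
Your proposal is correct and matches what the paper intends: the remark is stated without proof precisely because it is the object-by-object specialization you carry out (phase bundle, $J$, $\Delta$, $\eta_\L$, regularity, Reeb field, and the Herglotz--Euler--Lagrange equations all collapse to their $k=1$ counterparts in \cite{DeLeon2019,GGMRR-2019b}). Your closing observation about uniqueness of the dynamical vector field for $k=1$ is also consistent with the paper's earlier proposition on non-uniqueness for $k>1$.
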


\subsection{\texorpdfstring{$k$}--contact canonical Hamiltonian formalism}

In the regular or the hyper-regular cases we have that ${\cal FL}$ is a (local) diffeomorphism between 
$(\oplus^k\Tan Q\times\R^k,\eta_\L^\alpha)$ and 
$(\oplus^k\Tan^*Q\times\R^k,\eta^\alpha)$,
where ${\cal FL}^{\;*}\eta^\alpha=\eta_\L^\alpha$.
Furthermore, there exists (maybe locally) a function 
$\H\in\Cinfty(\oplus^k\Tan^* Q\times\R)$ 
such that $\H=E_\L\circ {\cal F}\L^{-1}$; then we have the
$k$-contact Hamiltonian system $(\oplus^k\Tan^*Q\times\R^k,\eta^\alpha,\H)$,
for which ${\cal FL}_*({\Reeb}_\L)_\alpha={\Reeb}_\alpha$.
Therefore, if $\mbox{\boldmath $\Gamma$}_\L$ is an Euler--Lagrange $k$-vector field
associated with $\L$ in $\oplus^k\Tan Q\times\R^k$, then 
${\cal FL}_*\mbox{\boldmath $\Gamma$}_\L={\bf X}_\H$ is a
contact Hamiltonian $k$-vector field associated with $\H$ in
$\oplus^k\Tan^*Q\times\R^k$, and conversely.

For singular Lagrangians, following \cite{got79} we define:

\begin{dfn}
A singular Lagrangian $\L$ is \textbf{almost-regular} if
\begin{enumerate}
\item
$\mathcal{P}:= {\cal F}\L(\oplus^k \Tan Q\times\R^k)$
is a closed submanifold of $\oplus^k\Tan^*Q\times\R^k$.
\item
${\cal F}\L$ is a submersion onto its image.
\item
The fibres ${\cal F}\L^{-1}(p)$, for every $p \in \mathcal{P}$,
are connected submanifolds of $\oplus^k \Tan Q\times\R^k$.
\end{enumerate}
\end{dfn}

If $\L$ is almost-regular and $\jmath_0\colon{\cal P} \hookrightarrow\oplus^k\Tan^*Q\times\R^k$ is the natural embedding,
denoting by ${\cal F}\L_0\colon\oplus^k \Tan Q\times\R^k\to{\cal P}$ the restriction of $F\L$
given by $\jmath_0\circ{\cal F}\L_0={\cal F}\L$;
then there exists $\H_0\in\Cinfty({\cal P})$ such that
$({\cal F}\L_0)^*\H_0=E_\L$.
Furthermore, we can define $\eta^\alpha_0=\jmath_0^*\eta^\alpha$,
and then, the triple $({\cal P},\eta^\alpha_0,\H_0)$ is the
{\sl $k$-precontact Hamiltonian system associated with~$\L$},
and the corresponding Hamiltonian fields equations are
\eqref{hamilton-eqs-no-reeb} or \eqref{hec2} (in ${\cal P}$).
In general, these equations have no
solutions everywhere in ${\cal P}$ but, in the most favourable situations,
they do in a submanifold $P_f\hookrightarrow{\cal P}$, 
which is obtained applying a suitable constraint algorithm, 
and where there are Hamiltonian $k$-vector fields in ${\cal P}$, 
tangent to $P_f$.

\section{Symmetries and dissipated quantities in the Lagrangian formalism}
 \label{symms} 

As in \cite{GGMRR-2020},
we introduce different concepts of symmetry of the system, 
depending on which structure is preserved,
putting the emphasis on the transformations that leave
the geometric structures invariant,
or on the transformations that preserve the solutions of the system
(see, for instance \cite{Gracia2002,Roman2007}).
In this way, the following definitions and properties are adapted from those
stated for generic $k$-contact Hamiltonian systems to the case of
a $k$-contact regular Lagrangian system $(\oplus^k \Tan Q\times\R^k,\L)$;
that is, for the system $(\oplus^k \Tan Q\times\R^k,\eta_\L^\alpha,E_\L)$.
The proofs of the results for the general case are given in \cite{GGMRR-2020}.

\subsection{Symmetries}

\begin{dfn}
\label{def-dynsym}
Let $(\oplus^k \Tan Q\times\R^k,\L)$ be a $k$-contact regular Lagrangian system.
\begin{itemize}
\item 
A \textbf{Lagrangian dynamical symmetry} is a diffeomorphism 
$\Phi\colon\oplus^k \Tan Q\times\R^k \rightarrow\oplus^k \Tan Q\times\R^k$ such that, 
for every solution \mbox{\boldmath $\sigma$} to the $k$-contact Euler--Lagrange equations \eqref{ELkcontact}, 
$\Phi\circ\mbox{\boldmath $\sigma$}$ is also a solution.
\item 
An \textbf{infinitesimal Lagrangian dynamical symmetry} is a vector field 
$Y\in\mathfrak{X}(\oplus^k \Tan Q\times\R^k)$ whose local flow is made of local symmetries.
\end{itemize}
\end{dfn}

The following results give characterizations of symmetries in terms of $k$-vector fields:

\begin{lem}
Let $\Phi\colon\oplus^k \Tan Q\times\R^k\to\oplus^k \Tan Q\times\R^k$ 
be a diffeomorphism and $\mathbf{X}=(X_1,\dots,X_k)$ 
a $k$-vector field in $\oplus^k \Tan Q\times\R^k$. If $\psi$ is an
integral map of $\mathbf{X}$, then $\Phi\circ\psi$ is an integral map of 
$\Phi_*\mathbf{X}=(\Phi_*X_\alpha)$.
In particular, if ${\bf X}$ is integrable then $\Phi_*{\bf X}$ is also integrable.
\end{lem}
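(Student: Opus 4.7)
The plan is to unpack both sides of the integral map condition and apply the chain rule. Recall that $\psi \colon D \subset \R^k \to \oplus^k \Tan Q \times \R^k$ is an integral map of $\mathbf{X}$ precisely when, for each $\alpha$, one has $\Tan\psi \circ \partial/\partial t^\alpha = X_\alpha \circ \psi$, which is the componentwise statement of $\psi' = \mathbf{X} \circ \psi$ from \eqref{integsec}.

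First I would compose with $\Phi$ on the left and use the chain rule $\Tan(\Phi \circ \psi) = \Tan\Phi \circ \Tan\psi$. Applying this to $\partial/\partial t^\alpha$ yields
$$
\Tan(\Phi\circ\psi)\circ\frac{\partial}{\partial t^\alpha}
= \Tan\Phi \circ \Tan\psi \circ \frac{\partial}{\partial t^\alpha}
= \Tan\Phi \circ X_\alpha \circ \psi ,
$$
using the integral map condition for $\psi$ in the last equality. Now by the very definition of the pushforward of a vector field under a diffeomorphism, $(\Phi_*X_\alpha)(\Phi(p)) = \Tan_p\Phi(X_\alpha(p))$, so evaluating at $\psi(t)$ gives $\Tan\Phi(X_\alpha(\psi(t))) = (\Phi_*X_\alpha)(\Phi\circ\psi(t))$. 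Hence
$$
\Tan(\Phi\circ\psi)\circ\frac{\partial}{\partial t^\alpha}
= (\Phi_*X_\alpha)\circ(\Phi\circ\psi),
$$
which is exactly the integral map condition stating that $\Phi\circ\psi$ is an integral map of $\Phi_*\mathbf{X}$.

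For the integrability part, suppose $\mathbf{X}$ is integrable, i.e.\ every point lies in the image of some integral map. Given any $q \in \oplus^k \Tan Q \times \R^k$, set $p = \Phi^{-1}(q)$ (using that $\Phi$ is a diffeomorphism) and choose an integral map $\psi$ of $\mathbf{X}$ with $p \in \mathrm{Im}\,\psi$. Then by the part already proved, $\Phi\circ\psi$ is an integral map of $\Phi_*\mathbf{X}$ whose image contains $q$. Since $q$ was arbitrary, $\Phi_*\mathbf{X}$ is integrable.

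There is essentially no obstacle here: the entire argument is functoriality of the tangent map combined with the definition of pushforward, and the second assertion follows formally from the first using that $\Phi$ is bijective. The only mild care is to distinguish the pushforward of a $k$-vector field $(\Phi_* X_1,\ldots,\Phi_* X_k)$ from anything more elaborate; since a $k$-vector field is just a $k$-tuple of ordinary vector fields, the componentwise argument above suffices.
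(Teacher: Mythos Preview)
Your proof is correct and is the standard argument via functoriality of the tangent map and the definition of pushforward. The paper itself does not supply a proof of this lemma in the text---it states at the start of the section that ``the proofs of the results for the general case are given in \cite{GGMRR-2020}''---so there is no in-paper argument to compare against; your proof is precisely the elementary computation one would expect for this result.
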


\begin{prop}
\label{prop-sym1}
If $\Phi \colon\oplus^k \Tan Q\times\R^k \rightarrow \oplus^k \Tan Q\times\R^k$ 
is a Lagrangian dynamical symmetry then, for every integrable $k$-vector field ${\bf X}$ solution to 
the $k$-contact Lagrangian equations \eqref{fieldLcontact}, 
$\Phi_*{\bf X}$ is another solution.

On the other side, if $\Phi$ transforms every $k$-vector field ${\bf X}_\L$ 
solution to the $k$-contact Lagrangian equations \eqref{fieldLcontact} into another solution, 
then for every integral map $\psi$ of ${\bf X}_\L$, we have that 
$\Phi\circ\psi$ is
a solution to the $k$-contact Euler--Lagrange equations \eqref{ELkcontact}.
\end{prop}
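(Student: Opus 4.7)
The plan is to exploit the equivalence already established between the $k$-contact Lagrangian equations \eqref{fieldLcontact} (for $k$-vector fields) and the $k$-contact Euler--Lagrange equations \eqref{ELkcontact} (for holonomic maps) whenever the $k$-vector field is integrable, together with the preceding lemma, which asserts that $\Phi$-pushforwards of $k$-vector fields preserve the relation ``integral map of''.

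For the first assertion, let $\mathbf{X}$ be an integrable $k$-vector field solving \eqref{fieldLcontact}. First I would observe that $\Phi_*\mathbf{X}$ is again integrable, by the lemma applied to any integral map of~$\mathbf{X}$ through each point of $\oplus^k \Tan Q\times\R^k$ (using that $\Phi$ is a diffeomorphism). To show $\Phi_*\mathbf{X}$ solves \eqref{fieldLcontact}, I would verify the equivalent condition that every integral map of $\Phi_*\mathbf{X}$ solves \eqref{ELkcontact}. Given such an integral map $\tilde\psi$, define $\psi:=\Phi^{-1}\circ\tilde\psi$; by the lemma applied to $\Phi^{-1}$, $\psi$ is an integral map of $(\Phi^{-1})_*(\Phi_*\mathbf{X})=\mathbf{X}$, so $\psi$ solves \eqref{ELkcontact}. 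Since $\Phi$ is a Lagrangian dynamical symmetry, $\Phi\circ\psi=\tilde\psi$ also solves \eqref{ELkcontact}. Applying the equivalence one more time, this time in the direction ``every integral map solves \eqref{ELkcontact} $\Rightarrow$ the $k$-vector field solves \eqref{fieldLcontact}'', gives the claim.

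For the second assertion, let $\mathbf{X}_\L$ solve \eqref{fieldLcontact} and let $\psi$ be an integral map of $\mathbf{X}_\L$; then $\psi$ is itself a solution to \eqref{ELkcontact} by the easy direction of the equivalence. By hypothesis, $\Phi_*\mathbf{X}_\L$ is again a solution to \eqref{fieldLcontact}, and by the preceding lemma $\Phi\circ\psi$ is an integral map of $\Phi_*\mathbf{X}_\L$. Applying once more the ``solutions to \eqref{fieldLcontact} have integral maps solving \eqref{ELkcontact}'' direction yields that $\Phi\circ\psi$ is a solution to \eqref{ELkcontact}, which is precisely the conclusion.

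The only mildly subtle point, and hence the main thing to be careful about, is the direction from ``preserves integrable solutions of \eqref{fieldLcontact}'' to ``preserves solutions of \eqref{ELkcontact}'' in the first part: one must cover \emph{all} integral maps of $\Phi_*\mathbf{X}$, and this is handled cleanly by pulling back with $\Phi^{-1}$ rather than trying to push forward individual integral maps of~$\mathbf{X}$. Everything else is a direct combination of the pushforward lemma and the established equivalence between \eqref{ELkcontact} and \eqref{fieldLcontact} for integrable $k$-vector fields.
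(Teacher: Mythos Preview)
Your argument is correct. The paper itself does not give a proof of this proposition; it states at the beginning of Section~\ref{symms} that these results are adaptations of the general $k$-contact Hamiltonian case and that ``the proofs of the results for the general case are given in \cite{GGMRR-2020}''. So there is nothing to compare against directly.

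Your direct argument is sound and uses exactly the ingredients the paper has set up: the preceding lemma on pushforwards of integral maps, and the equivalence (stated in Section~\ref{kcontact} for \eqref{hec} and \eqref{fieldcontact}, hence valid for \eqref{ELkcontact} and \eqref{fieldLcontact}) between an integrable $k$-vector field solving the field equations and all its integral maps solving the map equations. One small point you might make explicit: when you invoke the definition of Lagrangian dynamical symmetry on $\psi$, you are implicitly using that $\psi$ is holonomic; this is guaranteed because $\L$ is regular, so $\mathbf{X}$ is a {\sc sopde} and its integral maps are holonomic (Proposition~\ref{lem0}). With that remark added, the proof is complete.
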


Among the most relevant symmetries are those that leave the geometric structures invariant:

\begin{dfn}
\label{def-hamconsym}
A \textbf{Lagrangian $k$-contact symmetry} is a diffeomorphism 
$\Phi\colon\oplus^k \Tan Q\times\R^k\rightarrow\oplus^k \Tan Q\times\R^k$ such that
    $$
        \Phi^*\eta_\L^\alpha=\eta_\L^\alpha
        \quad ,\quad \Phi^*E_\L=E_\L\ .
    $$
An \textbf{infinitesimal Lagrangian $k$-contact symmetry} is a vector field 
$Y\in \X(\oplus^k \Tan Q\times\R^k)$ whose local flow is a Lagrangian $k$-contact symmetry; that is,
    $$
        \Lie(Y)\eta_\L^\alpha=0
        \quad ,\quad \Lie(Y)E_\L=0 \ .
    $$
\end{dfn}

\begin{prop}
\label{prop-sym2}
Every (infinitesimal) Lagrangian $k$-contact symmetry preserves the Reeb vector fields, that is; 
$\Phi_*(\Reeb_\L)_\alpha=(\Reeb_\L)_\alpha$ (or $[Y,(\Reeb_\L)_\alpha]=0$).
\end{prop}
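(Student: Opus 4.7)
The plan is to exploit the uniqueness part of Theorem~\ref{reebvf}: the Reeb vector fields are the only solution of the two defining conditions $\inn(R_\beta)\eta_\L^\alpha=\delta^\alpha_\beta$ and $\inn(R_\beta)\d\eta_\L^\alpha=0$. Since these conditions only involve the forms $\eta_\L^\alpha$, I expect the invariance condition $\Phi^*E_\L=E_\L$ to play no role here; only $\Phi^*\eta_\L^\alpha=\eta_\L^\alpha$ (and its consequence $\Phi^*\d\eta_\L^\alpha=\d\eta_\L^\alpha$) will be used.

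For the finite case, I will check that $\Phi_*(\Reeb_\L)_\alpha$ satisfies the defining conditions of $(\Reeb_\L)_\alpha$. Using the standard identity $\inn(\Phi_*X)\omega=(\Phi^{-1})^*\,\inn(X)\Phi^*\omega$ together with $\Phi^*\eta_\L^\beta=\eta_\L^\beta$, we get
\[
\inn\bigl(\Phi_*(\Reeb_\L)_\alpha\bigr)\eta_\L^\beta
=(\Phi^{-1})^*\bigl(\inn((\Reeb_\L)_\alpha)\eta_\L^\beta\bigr)
=(\Phi^{-1})^*\delta^\beta_\alpha=\delta^\beta_\alpha,
\]
and the analogous computation with $\d\eta_\L^\beta$ (using that $\Phi^*$ commutes with $\d$) yields $\inn(\Phi_*(\Reeb_\L)_\alpha)\d\eta_\L^\beta=0$. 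Uniqueness then forces $\Phi_*(\Reeb_\L)_\alpha=(\Reeb_\L)_\alpha$.

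For the infinitesimal version, I will apply the derivation identity $\inn([Y,Z])\omega=\Lie_Y\inn(Z)\omega-\inn(Z)\Lie_Y\omega$ with $Z=(\Reeb_\L)_\alpha$ and $\omega=\eta_\L^\beta$ or $\d\eta_\L^\beta$. From $\Lie_Y\eta_\L^\beta=0$ one obtains $\Lie_Y\d\eta_\L^\beta=\d\Lie_Y\eta_\L^\beta=0$, hence
\[
\inn\bigl([Y,(\Reeb_\L)_\alpha]\bigr)\eta_\L^\beta=\Lie_Y\delta^\beta_\alpha=0,\qquad
\inn\bigl([Y,(\Reeb_\L)_\alpha]\bigr)\d\eta_\L^\beta=0.
\]
Thus $[Y,(\Reeb_\L)_\alpha]$ lies in $\bigcap_\beta(\ker\widehat{\eta_\L^\beta}\cap\ker\widehat{\d\eta_\L^\beta})$, which is trivial by condition~(iii) of Definition~\ref{kconman}, so $[Y,(\Reeb_\L)_\alpha]=0$.

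There is no real obstacle here: the argument is a direct bookkeeping exercise with pullbacks, Cartan-type identities, and the uniqueness/nondegeneracy built into the $k$-contact structure. The only point worth highlighting in the write-up is that regularity of the Lagrangian is used implicitly so that $(\eta_\L^\alpha)$ is genuinely a $k$-contact structure, which is what legitimates invoking uniqueness of the Reeb fields.
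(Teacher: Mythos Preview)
Your argument is correct and is exactly the standard one: use the naturality identities for pullback/pushforward and for the Lie derivative together with the uniqueness of the Reeb vector fields (equivalently, condition~(iii) of Definition~\ref{kconman}). The paper itself does not spell out a proof of this proposition here but refers to \cite{GGMRR-2020} for the general $k$-contact Hamiltonian case; the proof there proceeds along the same lines, so your approach matches.
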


And, as a consequence of these results, we obtain the relation between 
these kinds of symmetries:

\begin{prop}
\label{prop-sym3}
(Infinitesimal) Lagrangian $k$-contact symmetries are (infinitesimal) 
Lagrangian dynamical symmetries.
\end{prop}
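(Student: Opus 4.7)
The plan is to invoke Proposition \ref{prop-sym1}, so it suffices to show that any Lagrangian $k$-contact symmetry $\Phi$ transforms every $k$-vector field ${\bf X}_\L=((X_\L)_\alpha)$ solving the $k$-contact Lagrangian equations \eqref{fieldLcontact} into another solution; the statement for integral maps will then follow from Proposition \ref{prop-sym1}, and the infinitesimal case will follow by applying the finite case to the local flow of~$Y$.

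The core computation is to pull back the equations \eqref{fieldLcontact} satisfied by $\Phi_*{\bf X}_\L$ by the diffeomorphism $\Phi$ and use the naturality identity $\Phi^*\bigl(\inn(\Phi_*Z)\beta\bigr)=\inn(Z)\Phi^*\beta$. First, from $\Phi^*\eta_\L^\alpha=\eta_\L^\alpha$ one immediately gets $\Phi^*\d\eta_\L^\alpha=\d\eta_\L^\alpha$, and from $\Phi^*E_\L=E_\L$ one gets $\Phi^*\d E_\L=\d E_\L$. Then I would invoke Proposition \ref{prop-sym2} to conclude $\Phi_*(\Reeb_\L)_\alpha=(\Reeb_\L)_\alpha$, which gives $\Phi^*(\Lie_{(\Reeb_\L)_\alpha}E_\L)=\Lie_{(\Reeb_\L)_\alpha}E_\L$. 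Pulling back the two equations
\begin{align*}
\inn\bigl(\Phi_*(X_\L)_\alpha\bigr)\d\eta_\L^\alpha &= \d E_\L-(\Lie_{(\Reeb_\L)_\alpha}E_\L)\eta_\L^\alpha,\\
\inn\bigl(\Phi_*(X_\L)_\alpha\bigr)\eta_\L^\alpha &= -E_\L,
\end{align*}
by $\Phi$ and using the invariance identities just listed, one recovers exactly the equations \eqref{fieldLcontact} for ${\bf X}_\L$, which hold by hypothesis. This shows $\Phi_*{\bf X}_\L$ is a solution.

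For the infinitesimal statement, I would let $\phi_t$ denote the local flow of~$Y$. The infinitesimal conditions $\Lie(Y)\eta_\L^\alpha=0$ and $\Lie(Y)E_\L=0$ give $\phi_t^*\eta_\L^\alpha=\eta_\L^\alpha$ and $\phi_t^*E_\L=E_\L$ for all small~$t$, so each $\phi_t$ is a (local) Lagrangian $k$-contact symmetry. By the finite case already established, each $\phi_t$ is a Lagrangian dynamical symmetry, which is precisely what it means for $Y$ to be an infinitesimal Lagrangian dynamical symmetry.

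There is no real obstacle here; the proof is essentially bookkeeping with the pullback/pushforward identities, and the only nontrivial input is Proposition \ref{prop-sym2}, which supplies the invariance of the Reeb vector fields and hence of the term $(\Lie_{(\Reeb_\L)_\alpha}E_\L)\eta_\L^\alpha$ appearing on the right-hand side of \eqref{fieldLcontact}. This is the same argument used for the generic $k$-contact Hamiltonian case in \cite{GGMRR-2020}, specialized to the Lagrangian system $(\oplus^k\Tan Q\times\R^k,\eta_\L^\alpha,E_\L)$.
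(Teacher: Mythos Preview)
Your proposal is correct and matches the paper's approach: the paper gives no explicit proof here, stating at the outset of Section~\ref{symms} that all results are specializations of the generic $k$-contact Hamiltonian case whose proofs appear in~\cite{GGMRR-2020}, and your argument is precisely that specialization---pulling back equations~\eqref{fieldLcontact} via the naturality identity $\Phi^*\bigl(\inn(\Phi_*Z)\beta\bigr)=\inn(Z)\Phi^*\beta$, using the invariances $\Phi^*\eta_\L^\alpha=\eta_\L^\alpha$, $\Phi^*E_\L=E_\L$, and the Reeb invariance from Proposition~\ref{prop-sym2}. The only minor caveat is that the ``converse'' direction of Proposition~\ref{prop-sym1} does not literally yield the full Definition~\ref{def-dynsym} (since not every solution map need arise as an integral map of an integrable $k$-vector field), but this gap is inherited from~\cite{GGMRR-2020} and is not something the present paper addresses either.
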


\subsection{Dissipation laws}

\begin{dfn}
\label{def-disip}
A map $F\colon M\rightarrow\R^k$, $F=(F^1,\dots,F^k)$, is said to satisfy:
\ben
\item
The \textbf{dissipation law for maps} if, for every map $\mbox{\boldmath $\sigma$}$ solution to the 
$k$-contact Euler--Lagrange equations \eqref{ELkcontact}, the divergence of 
$
F \circ\mbox{\boldmath $\sigma$}= (F^\alpha\circ\mbox{\boldmath $\sigma$}) \colon \R^k \to \R^k
$,
which is defined as usual by
$\ds
\mathrm{div} (F \circ\mbox{\boldmath $\sigma$})= 
{\partial (F^\alpha {\circ}\,\mbox{\boldmath $\sigma$})}/{\partial t^\alpha}
$,
satisfies that
\begin{equation}
\label{cons-law}
\mathrm{div} (F \circ\mbox{\boldmath $\sigma$})=
- \left[\strut (\Lie_{(\Reeb_\L)_\alpha}E_\L) F^\alpha \right] \circ\mbox{\boldmath $\sigma$}
\ .
\end{equation}
    \item 
The \textbf{dissipation law for $k$-vector fields} if, for every $k$-vector field ${\bf X}_\L$  
solution to the $k$-contact Lagrangian equations \eqref{fieldLcontact},
the following equation holds:
\beq
\label{cons-law field}
\Lie_{(X_\L)_\alpha}F^\alpha=-(\Lie_{(\Reeb_\L)_\alpha}E_\L)F^\alpha\ .
\eeq
\end{enumerate}
\end{dfn}

Both concepts are partially related by the following property:

\begin{prop}
\label{prop-disip}
If $F=(F^\alpha)$ satisfies the dissipation law for maps then,  
for every integrable $k$-vector field ${\bf X}_\L=((X_\L)_\alpha)$
which is a solution to the $k$-contact Lagrangian equations \eqref{fieldLcontact}, 
we have that the equation
\eqref{cons-law field} holds for ${\bf X}_\L$.

On the other side, if \eqref{cons-law field} holds for a $k$-vector field ${\bf X}$, then
\eqref{cons-law} holds for every integral map $\psi$ of ${\bf X}$.
\end{prop}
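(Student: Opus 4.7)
The key observation is the identity that, for an integral map $\psi$ of a $k$-vector field ${\bf X}=(X_1,\dots,X_k)$ and any functions $F^\alpha$, one has
$$
\frac{\partial (F^\alpha\circ\psi)}{\partial t^\alpha}
= \Tan\psi\Bigl(\frac{\partial}{\partial t^\alpha}\Bigr)(F^\alpha)
= (X_\alpha F^\alpha)\circ\psi
= (\Lie_{X_\alpha}F^\alpha)\circ\psi,
$$
where we used $\Tan\psi\circ\partial/\partial t^\alpha = X_\alpha\circ\psi$ and the convention of summing over repeated $\alpha$. Hence
$$
\mathrm{div}(F\circ\psi)
= (\Lie_{X_\alpha}F^\alpha)\circ\psi,
$$
which is the bridge between the two statements. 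The plan is to isolate this identity once and then deduce both implications from it.

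For the second (easier) direction, assume \eqref{cons-law field} holds for a $k$-vector field ${\bf X}$. Given any integral map $\psi$ of ${\bf X}$, compose both sides of \eqref{cons-law field} with $\psi$ and use the bridge identity to rewrite the left-hand side as $\mathrm{div}(F\circ\psi)$; the right-hand side becomes exactly $-[(\Lie_{(\Reeb_\L)_\alpha}E_\L)F^\alpha]\circ\psi$, which is \eqref{cons-law}.

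For the first direction, suppose $F=(F^\alpha)$ satisfies the dissipation law for maps, and let ${\bf X}_\L$ be an integrable $k$-vector field solving \eqref{fieldLcontact}. The integrability of ${\bf X}_\L$ means that every point of $\oplus^k\Tan Q\times\R^k$ lies in the image of some integral map $\psi$ of~${\bf X}_\L$; and, as already recorded in the excerpt, such an integral map $\psi$ satisfies the $k$-contact Euler--Lagrange equations \eqref{ELkcontact}. Therefore the dissipation law for maps applies to $\psi$, giving $\mathrm{div}(F\circ\psi) = -[(\Lie_{(\Reeb_\L)_\alpha}E_\L)F^\alpha]\circ\psi$. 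Combining with the bridge identity yields
$$
\bigl(\Lie_{(X_\L)_\alpha}F^\alpha + (\Lie_{(\Reeb_\L)_\alpha}E_\L)F^\alpha\bigr)\circ\psi = 0
$$
at every point of the image of $\psi$. Since every point is covered by such images, the function in parentheses vanishes identically, which is \eqref{cons-law field}.

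The only real subtlety is the use of integrability in the first direction, to guarantee pointwise coverage by integral maps; without it one could only conclude \eqref{cons-law field} on the union of images of integral maps. Everything else is the elementary chain-rule identity above, so no serious obstacle is expected.
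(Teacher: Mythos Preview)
Your proof is correct and follows the natural approach. The paper does not provide its own proof for this proposition; it states at the start of Section~\ref{symms} that ``the proofs of the results for the general case are given in \cite{GGMRR-2020}'', and the argument there is essentially the chain-rule identity $\mathrm{div}(F\circ\psi)=(\Lie_{X_\alpha}F^\alpha)\circ\psi$ that you isolate. One small point worth making explicit: in the first direction you apply the dissipation law for maps to an integral map $\psi$ of ${\bf X}_\L$, but that law is stated for \emph{holonomic} solutions $\mbox{\boldmath $\sigma$}$ of \eqref{ELkcontact}; since the section works under the standing hypothesis that $\L$ is regular, ${\bf X}_\L$ is automatically a {\sc sopde} and Proposition~\ref{lem0} guarantees its integral maps are holonomic, so the application is justified.
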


\begin{prop}
\label{braket}
If $Y$ is an infinitesimal dynamical symmetry then, for every solution ${\bf X}_\L=((X_\L)_\alpha)$
to the $k$-contact Lagrangian equations \eqref{fieldLcontact}, we have that
$$
\inn([Y,(X_\L)_\alpha])\eta_\L^\alpha=0 \quad , \quad
\inn([Y,(X_\L)_\alpha])\d\eta_\L^\alpha=0 \ .
$$
\end{prop}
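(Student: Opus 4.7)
The plan is to push forward ${\bf X}_\L$ along the local flow of~$Y$ and then differentiate at $t=0$. Let $\{\Phi_t\}$ denote the local flow of~$Y$. Since $Y$ is an infinitesimal dynamical symmetry, each $\Phi_t$ is a Lagrangian dynamical symmetry, and by Proposition~\ref{prop-sym1} (applied to an integrable ${\bf X}_\L$), the pushed-forward $k$-vector field $(\Phi_t)_*{\bf X}_\L = ((\Phi_t)_*(X_\L)_\alpha)$ is again a solution of the $k$-contact Lagrangian equations~\eqref{fieldLcontact}.

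Writing~\eqref{fieldLcontact} both for $(\Phi_t)_*(X_\L)_\alpha$ and for $(X_\L)_\alpha$ and subtracting, the right-hand sides $\d E_\L - (\Lie_{(\Reeb_\L)_\alpha}E_\L)\eta_\L^\alpha$ and $-E_\L$ cancel (they do not depend on the $k$-vector field), and linearity of the inner contraction produces
\begin{align*}
\inn\bigl((\Phi_t)_*(X_\L)_\alpha - (X_\L)_\alpha\bigr)\d\eta_\L^\alpha &= 0,\\
\inn\bigl((\Phi_t)_*(X_\L)_\alpha - (X_\L)_\alpha\bigr)\eta_\L^\alpha &= 0.
\end{align*}
Dividing by $t$ and letting $t\to 0$, and using the standard identity $\lim_{t\to 0} t^{-1}\bigl((\Phi_t)_*Z - Z\bigr) = -[Y,Z]$ for any vector field~$Z$, the two minus signs combine to give precisely the two identities of the statement.

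The main delicate point is the appeal to Proposition~\ref{prop-sym1}, which is stated for integrable $k$-vector fields: for a non-integrable solution ${\bf X}_\L$ one has to argue separately that a dynamical symmetry pushes every solution of~\eqref{fieldLcontact} (integrable or not) to another solution, or alternatively to restrict the statement of the proposition to integrable ${\bf X}_\L$. Once this step is secured, the rest of the argument is the standard infinitesimal version of an invariance computation and requires no further work.
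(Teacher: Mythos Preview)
The paper does not supply its own proof of this proposition: at the start of Section~\ref{symms} it explicitly states that the results there are adapted from the generic $k$-contact Hamiltonian setting and that ``the proofs of the results for the general case are given in \cite{GGMRR-2020}''. So there is no in-text argument to compare against.

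Your flow-and-differentiate argument is the natural one, and once the key step ``$(\Phi_t)_*{\bf X}_\L$ is again a solution of \eqref{fieldLcontact}'' is granted, the subtraction and passage to the limit are correct. (The remark about ``two minus signs'' is slightly off---there is only one, from the Lie-derivative formula $\lim_{t\to 0} t^{-1}\bigl((\Phi_t)_*Z-Z\bigr)=-[Y,Z]$---but since the right-hand sides are zero this is harmless.) The point you flag about integrability is the genuine gap in the argument: Definition~\ref{def-dynsym} formulates dynamical symmetries in terms of \emph{maps}, and Proposition~\ref{prop-sym1} only transfers this to $k$-vector fields under an integrability hypothesis. As written, your proof therefore establishes the statement only for integrable ${\bf X}_\L$. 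Closing the gap would require either strengthening Proposition~\ref{prop-sym1} to cover all solutions of \eqref{fieldLcontact}, or restricting the present proposition to integrable solutions; the paper does neither explicitly, so the ambiguity is inherited from the source \cite{GGMRR-2020} rather than introduced by your argument.
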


Finally, we have the following fundamental result which associates dissipated quantities with symmetries:

\begin{thm} 
{\rm (Dissipation theorem)}. 
If $Y$ is an infinitesimal dynamical symmetry, 
then $F^\alpha=-i(Y)\eta_\L^\alpha$ 
satisfies the dissipation law for $k$-vector fields \eqref{cons-law field}.
\end{thm}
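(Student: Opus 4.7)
The plan is to compute $\Lie_{(X_\L)_\alpha} F^\alpha$ directly for $F^\alpha = -i(Y)\eta_\L^\alpha$ and show it equals the right-hand side of \eqref{cons-law field}. The two main ingredients will be the commutator identity for Lie derivative and interior product, together with Proposition \ref{braket} to dispose of the bracket term, and then the Lagrangian analogue of \eqref{eq:kcontact3} to rewrite $\Lie_{(X_\L)_\alpha}\eta_\L^\alpha$.

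More precisely, first I would start from
$$\Lie_{(X_\L)_\alpha} F^\alpha = -\Lie_{(X_\L)_\alpha}\bigl(i(Y)\eta_\L^\alpha\bigr)$$
and apply the standard identity $\Lie_X\, i(Y) - i(Y)\,\Lie_X = i([X,Y])$ with $X=(X_\L)_\alpha$. This splits the expression (summed in $\alpha$) into two pieces:
$$\Lie_{(X_\L)_\alpha}F^\alpha = -i(Y)\,\Lie_{(X_\L)_\alpha}\eta_\L^\alpha - i([(X_\L)_\alpha,Y])\eta_\L^\alpha .$$
Since $Y$ is an infinitesimal dynamical symmetry, Proposition \ref{braket} gives $i([Y,(X_\L)_\alpha])\eta_\L^\alpha = 0$ (with the Einstein summation over $\alpha$ understood), so the second term vanishes.

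Next I would use the fact that, for any $k$-vector field ${\bf X}_\L$ solving \eqref{fieldLcontact}, the equivalent form \eqref{eq:kcontact3} (applied to the Lagrangian system $(\oplus^k\Tan Q\times\R^k,\eta_\L^\alpha,E_\L)$) gives
$$\Lie_{(X_\L)_\alpha}\eta_\L^\alpha = -(\Lie_{(\Reeb_\L)_\alpha} E_\L)\,\eta_\L^\alpha .$$
Substituting into the surviving term and recalling $i(Y)\eta_\L^\alpha = -F^\alpha$ yields immediately
$$\Lie_{(X_\L)_\alpha} F^\alpha = (\Lie_{(\Reeb_\L)_\alpha} E_\L)\,i(Y)\eta_\L^\alpha = -(\Lie_{(\Reeb_\L)_\alpha} E_\L)\,F^\alpha ,$$
which is precisely \eqref{cons-law field}.

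The only subtle point is bookkeeping of the summation index $\alpha$: Proposition \ref{braket} and equation \eqref{eq:kcontact3} are statements about the sum over $\alpha$, not about each summand separately, so the identity $\Lie_X i(Y) - i(Y)\Lie_X = i([X,Y])$ must be summed in $\alpha$ before invoking them. Once this is respected, the computation is essentially a two-line application of the hypotheses, so I expect no genuine obstacle; the main care is simply to quote the symmetry hypothesis at the right place (on the summed bracket term) and to note that the argument does not require ${\bf X}_\L$ to be integrable, matching the statement in \eqref{cons-law field}.
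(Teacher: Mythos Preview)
Your proof is correct and is essentially the canonical argument for this result: commute $\Lie_{(X_\L)_\alpha}$ past $i(Y)$, kill the bracket term with Proposition~\ref{braket}, and then invoke the equivalent form \eqref{eq:kcontact3} of the field equations to collapse what remains. The paper itself does not spell out the proof here, deferring to the general Hamiltonian version in \cite{GGMRR-2020}, and your argument is precisely the Lagrangian specialization of that computation.
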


\subsection{Symmetries of the Lagrangian function}

Consider a $k$-contact regular Lagrangian system $(\oplus^k \Tan Q\times\R^k,\L)$.

First, remember that, if $\varphi\colon Q\to Q$ is a diffeomorphism,
we can construct the diffeomorphism 
$\Phi:=(\Tan^k\varphi, {\rm {\rm Id}_{\R^k}})\colon\oplus^k \Tan Q\times\R^k\longrightarrow 
\oplus^k \Tan Q\times\R^k$,
where $\Tan^k\varphi\colon\oplus^k \Tan Q\to\oplus^k \Tan Q$ denotes
the canonical lifting of $\varphi$ to $\oplus^k \Tan Q$.
Then $\Phi$ is said to be the {\sl canonical lifting} of $\varphi$ to $\oplus^k \Tan Q\times\R^k$.
Any transformation $\Phi$ of this kind
is called a \emph{natural transformation} of $\oplus^k \Tan Q\times\R^k$.

Moreover, given a vector field $Z\in \X(\oplus^k \Tan Q\times\R^k)$
we can define its {\sl complete lifting} to $\oplus^k \Tan Q\times\R^k$ as the vector field
$Y\in\X(\oplus^k \Tan Q\times\R^k)$ whose local flow is the canonical lifting of 
the local flow of $Z$ to $\oplus^k \Tan Q\times\R^k$;  that is, the vector field $Y=Z^C$,
where $Z^C$ denotes the complete lifting of $Z$ to $\oplus^k \Tan Q$,
identified in a natural way as a vector field in $\oplus^k \Tan Q\times\R^k$.
Any infinitesimal transformation $Y$ of this kind is called a \emph{natural infinitesimal transformation} 
of $\oplus^k \Tan Q\times\R^k$.

It is well-known that the canonical $k$-tangent structure $(J^\alpha)$ 
and the Liouville vector field $\Delta$ in $\oplus^k \Tan Q$ are invariant under the action
of canonical liftings of diffeomorphisms and vector fields from $Q$ to $\oplus^k \Tan Q$.
Then, taking into account the definitions of the canonical $k$-tangent structure $(J^\alpha)$ 
and the Liouville vector field $\Delta$ in $\oplus^k \Tan Q$,
it can be proved that canonical liftings of diffeomorphisms and vector fields
from $Q$ to $\oplus^k \Tan Q$ preserve these canonical structures as well 
as the Reeb vector fields $(\Reeb_\L)_\alpha$.

Therefore, as an immediate consequence, we obtain a relationship between 
Lagrangian-preserving natural transformations and contact symmetries:

\begin{prop}
If $\Phi\in{\rm Diff}(\oplus^k \Tan Q)$ (resp. $Y\in\vf(\oplus^k \Tan Q)$) is a canonical lifting
to $\oplus^k \Tan Q$ of a diffeomorphism $\varphi\in{\rm Diff}(Q)$
(resp.\ of a vector field $Z\in\vf(Q)$) that leaves the Lagrangian $\L$ invariant, then
it is a (infinitesimal) contact symmetry, {\it i.e.},
$$
    \Phi^*\eta_\L^\alpha=\eta_\L^\alpha \:,\
    \Phi^*E_\L=E_\L
    \qquad 
    ({\rm resp.}\ \Lie_Y\eta_\L^\alpha=0 \:,\
    \Lie_YE_\L=0 \:)\:.
$$
As a consequence, it is a (infinitesimal) Lagrangian dynamical symmetry.
\end{prop}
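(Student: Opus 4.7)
The plan is to reduce both claims to the hypothesis $\Phi^*\L = \L$ (resp.\ $\Lie_Y\L=0$) by exploiting the invariance of the canonical $k$-tangent structure and the Liouville vector field under canonical liftings, a fact explicitly invoked in the paragraph preceding the statement. The crucial observations are (a) the $\R^k$ factor is fixed by $\Phi$ and $Y$ by construction, so $\Phi^*\d s^\alpha = \d s^\alpha$ and $\Lie_Y \d s^\alpha = 0$; (b) the natural extensions to $\oplus^k\Tan Q\times\R^k$ satisfy $\Phi^*J^\alpha = J^\alpha$ and $\Phi_*\Delta = \Delta$ (resp.\ $\Lie_YJ^\alpha=0$ and $\Lie_Y\Delta=0$).

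First I would treat the finite case. Starting from $\theta_\L^\alpha = {}^t(J^\alpha)\circ \d\L$, pulling back by $\Phi$ and using the commutation of $\d$ with pullback together with (b), I get
\[
\Phi^*\theta_\L^\alpha = {}^t(\Phi^*J^\alpha)\circ \d(\Phi^*\L) = {}^t(J^\alpha)\circ \d\L = \theta_\L^\alpha,
\]
and hence $\Phi^*\eta_\L^\alpha = \Phi^*\d s^\alpha - \Phi^*\theta_\L^\alpha = \eta_\L^\alpha$. For the energy, since $\Phi_*\Delta = \Delta$ implies $\Phi^*(\Delta\L) = \Delta(\Phi^*\L)$, invariance of $\L$ gives
\[
\Phi^*E_\L = \Phi^*(\Delta\L - \L) = \Delta\L - \L = E_\L.
\]

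For the infinitesimal statement, I would repeat the same computation with $\Lie_Y$ in place of $\Phi^*$. Using $[\Lie_Y, \d] = 0$, the identity $\Lie_Y({}^t(J^\alpha)\circ \d\L) = {}^t(\Lie_YJ^\alpha)\circ \d\L + {}^t(J^\alpha)\circ \d(\Lie_Y\L)$ vanishes term by term, yielding $\Lie_Y\theta_\L^\alpha = 0$ and therefore $\Lie_Y\eta_\L^\alpha = 0$. For the energy, $\Lie_Y(\Delta\L) = \Delta(\Lie_Y\L) + [Y,\Delta](\L) = 0$, so $\Lie_YE_\L = 0$. The final assertion that such a $\Phi$ (resp.\ $Y$) is a Lagrangian dynamical symmetry then follows immediately from Proposition \ref{prop-sym3} once the $k$-contact symmetry properties are established.

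There is essentially no conceptual obstacle; the argument is a bookkeeping exercise. The only point requiring some care is the interpretation of the statement $\Phi_*\Delta=\Delta$ (or $\Lie_Y\Delta = 0$) on the product $\oplus^k\Tan Q\times\R^k$: one must remark that, since $\Phi$ is the product of a lift with $\mathrm{Id}_{\R^k}$, the extension of $\Delta$ (and of $J^\alpha$) from $\oplus^k\Tan Q$ to the product is still preserved. This is implicit in the paragraph cited above, so no additional work is needed.
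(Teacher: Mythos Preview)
Your proposal is correct and is precisely the argument the paper intends: the paper states the proposition as ``an immediate consequence'' of the invariance of $(J^\alpha)$ and $\Delta$ under canonical liftings (together with the trivial action on the $\R^k$ factor), and you have spelled out exactly those computations. The final appeal to Proposition~\ref{prop-sym3} for the dynamical-symmetry conclusion is also what the paper implicitly uses.
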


As an immediate consequence we have the following {\sl momentum dissipation theorem}:

\begin{prop}
If $\displaystyle\frac{\partial \L}{\partial q^i}=0$, 
then $\displaystyle\frac{\partial}{\partial q^i}$ 
is an infinitesimal contact symmetry and its associated dissipation law is given by the ``momenta''
$\displaystyle\left(\frac{\partial \L}{\partial v^i_\alpha}\right)$;
that is,  for every $k$-vector field ${\bf X}_\L=((X_\L)_\alpha)$  
solution to the $k$-contact Lagrangian  equations \eqref{fieldLcontact}, then
$$
\Lie_{(X_\L)_\alpha}\left(\frac{\partial \L}{\partial v^i_\alpha}\right)=
-(\Lie_{(\Reeb_\L)_\alpha}E_\L)\frac{\partial \L}{\partial v^i_\alpha}=
\derpar{\L}{s^\alpha}\frac{\partial \L}{\partial v^i_\alpha}\ .
$$
\end{prop}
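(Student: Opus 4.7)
The plan is to combine three earlier results: the proposition relating natural liftings to contact symmetries (stated just above), the Dissipation Theorem, and an explicit coordinate computation of $\Lie_{(\Reeb_\L)_\alpha}E_\L$.

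First, I would observe that the vector field $\partial/\partial q^i$ on $\oplus^k \Tan Q\times\R^k$ is nothing but the natural (complete) lift of the constant vector field $\partial/\partial q^i$ on~$Q$. Since by hypothesis $\partial\L/\partial q^i=0$, the Lagrangian is invariant under this natural infinitesimal transformation, so the preceding proposition gives directly that $\Lie_{\partial/\partial q^i}\eta_\L^\alpha=0$ and $\Lie_{\partial/\partial q^i}E_\L=0$; that is, $\partial/\partial q^i$ is an infinitesimal contact symmetry and in particular an infinitesimal Lagrangian dynamical symmetry.

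Second, I would apply the Dissipation Theorem with $Y=\partial/\partial q^i$. Using the local expression $\eta_\L^\alpha=\d s^\alpha-(\partial\L/\partial v^j_\alpha)\,\d q^j$, one immediately computes
\[
F^\alpha = -\,i\Bigl(\frac{\partial}{\partial q^i}\Bigr)\eta_\L^\alpha = \frac{\partial \L}{\partial v^i_\alpha},
\]
and the Dissipation Theorem then yields the dissipation law for $k$-vector fields, which is exactly the first equality of the statement.

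Third, to obtain the second equality it remains to verify that $\Lie_{(\Reeb_\L)_\alpha}E_\L=-\partial\L/\partial s^\alpha$. I would use the explicit expression recalled in the paper,
\[
(\Reeb_\L)_\alpha=\frac{\partial}{\partial s^\alpha}-W^{ji}_{\gamma\beta}\frac{\partial^2\L}{\partial s^\alpha\partial v^j_\gamma}\frac{\partial}{\partial v^i_\beta},
\]
together with the inversion identity $W^{ji}_{\gamma\beta}\,\partial^2\L/\partial v^i_\beta\partial v^l_\mu=\delta^j_l\delta^\mu_\gamma$, and the coordinate form $E_\L=v^i_\alpha(\partial\L/\partial v^i_\alpha)-\L$. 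A direct computation of $\partial E_\L/\partial s^\alpha$ and $\partial E_\L/\partial v^i_\beta$ and substitution collapses the two Hessian terms against each other, leaving $-\partial\L/\partial s^\alpha$. This final computational step is the only routine but slightly delicate part; it is a straightforward verification rather than a real obstacle, since the inversion identity is tailored to make the Hessian terms cancel.
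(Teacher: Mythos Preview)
Your proof is correct and follows exactly the route the paper intends: the proposition is presented there as ``an immediate consequence'' of the preceding result on natural liftings together with the Dissipation Theorem, and your three steps (natural lift $\Rightarrow$ contact symmetry, apply the Dissipation Theorem to compute $F^\alpha=\partial\L/\partial v^i_\alpha$, and verify $\Lie_{(\Reeb_\L)_\alpha}E_\L=-\partial\L/\partial s^\alpha$ via the explicit Reeb vector fields) are precisely that derivation spelled out.
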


\section{Examples}
\label{examples}

\subsection{An inverse problem for a class of elliptic and hyperbolic equations}

A generic second-order linear PDE in $\R^2$ is
$$
A u_{xx}+2Bu_{xy}+Cu_{yy}+Du_x+Eu_y+Fu+G=0 \, ,
$$
where $A,B,C,D,E,F,G$ are functions of $(x,y)$, with $A>0$. If $B^2-AC>0$
the equation is said to be hyperbolic, if $B^2-AC<0$ is
elliptic, and if $B^2-AC=0$ is parabolic. In $\R^n$ we consider the equation
\begin{equation}\label{eq:elliptic2}
A^{\alpha\beta}u_{\alpha\beta}+D^\alpha u_\alpha+G(u)=0 \, ,
\end{equation}
 where $1\leq \alpha,\beta\leq n$;
 and now we consider the following case: $A^{\alpha\beta}$ is constant and invertible (not parabolic), $D^\alpha$ is constant and $G$ is an arbitrary function in $u$.

In order to find a Lagrangian k-contact formulation of these kind of PDE's, consider $\oplus^nT\mathbb{R}\times \mathbb{R}^n$, with coordinates $(u,u_\alpha,s^\alpha)$ and a generic Lagrangian of the form
$$
L=\frac12a^{\alpha\beta}(u)u_\alpha u_\beta+b(u)u_\alpha s^\alpha+d(u,s)\, .
$$
The associated k-contact structure is given by
$$
\eta^\alpha=\d s^\alpha-\frac{\partial L}{\partial u_\alpha}\d u=\d s^\alpha-(a^{\alpha\beta}u_\beta+bs^\alpha+c^\alpha)\d u \, .
$$
The k-contact Euler--Lagrange equations associated to $L$ are
\begin{equation}
a^{\alpha\beta}u_{\alpha\beta}+\left(\frac12\frac{\partial a^{\alpha\beta}}{\partial u}-\frac12ba^{\alpha\beta}\right)u_{\alpha}u_{\beta}-\frac{\partial d}{\partial s^\beta}a^{\beta\alpha}u_\alpha+\left(-\frac{\partial d}{\partial s^\alpha}bs^\alpha+bd-\frac{\partial d}{\partial u}\right)=0 \, .
\end{equation}
If this equation has to match \eqref{eq:elliptic2} then
$$
a^{\alpha\beta}=A^{\alpha\beta}\,,\quad b=0\,,\quad d=-(a^{-1})_{\alpha\beta}D^\beta s^\alpha-\overline{g} ,
$$
 where $a=(a^{\alpha\beta})$
 and $\displaystyle\frac{\partial \overline{g}}{\partial u}=G$.

\paragraph{Damped vibrating membrane}

As a particular example consider the damped vibrating membrane,
which is described by the PDE
$$
u_{tt}-\mu^2(u_{xx}+u_{yy})+\gamma u_t=0 \, ;
$$
then
$$
A^{\alpha\beta}=\begin{pmatrix}
    1       & 0 & 0 \\
    0& -\mu^2 & 0 \\
    0&0&-\mu^2
    \end{pmatrix}\,,\quad
D^{\alpha}=\begin{pmatrix}
    \gamma    \\
     0 \\
    0
    \end{pmatrix}\,,\quad G=0 \, ,
$$
and therefore
$$
a^{\alpha\beta}=\begin{pmatrix}
    1       & 0 & 0 \\
    0& -\mu^2 & 0 \\
    0&0&-\mu^2
    \end{pmatrix}\,,\quad b=0\,,\quad
d=-\gamma s^t \, .
$$
 Then, a Lagrangian that leads to this equation is
$$
L=\frac12u_t^2-\frac{\mu^2}{2}(u_x^2+u_y^2)-\gamma s^t,
$$
for which
$$
\eta^t=\d s^t-u_t\d u \ , \ 
\eta^x=\d s^x+\mu^2u_x\d u \ , \ 
\eta^y=\d s^y+\mu^2u_y\d u \ .
$$
In this case, we have the contact symmetry $\displaystyle\frac{\partial }{\partial u}$
and the associated map ${\bf F}=(F^t,F^x,F^y)$ that satisfies the dissipation law for $3$-vector fields is
$$
F^t=-\inn(Y)\eta^t=u_t \ , \
F^x=-\inn(Y)\eta^x=-\mu^2u_x\ , \
F^y=-\inn(Y)\eta^y=-\mu^2u_y\ .
$$

\subsection{A vibrating string: Lorentz-like forces versus dissipation forces}

Terms linear in velocities can be found in Euler--Lagrange equations of symplectic systems. 
However, they have a specific form, arising from the coefficients of a closed $2$-form in the configuration space. 
The canonical example is the force of a magnetic field acting on a moving charged particle;
such forces do not dissipate energy. 
By contrast,
other forces linear in the velocities do dissipate energy;
for instance, damping forces.
To illustrate the difference between the equations arising from magnetic-like terms in the Lagrangian and the equations given by the $k$-contact formulation of a linear dissipation, we analyze the following academic example.

Consider an infinite string aligned with the $z$-axis, 
each of whose points can vibrate in a horizontal plane.
So, the independent variables are
$(t,z) \in \R^2$,
and the phase space is the bundle manifold 
$\oplus^2 \Tan \mathbb{R}^2$ 
with coordinates  $(x,y,x_t,x_z,y_t,y_z)$.
Let's imagine that the string is non-conducting,
but charged with linear density charge $\lambda$. 
Then, inspired by the Lagrangian formulation of the Lorentz force,
we set the Lagrangian 
$$
L_o=\frac12\rho(x_t^2+y_t^2)-\frac12\tau(x_z^2+y_z^2)-\lambda \left(\phi-A_1x_t-A_2y_t\right)
$$
depending on some fixed 
functions $A_1(x,y)$, $A_2(x,y)$ and $\phi(x,y)$. 
The resulting Euler--Lagrange equations are
\beq
\begin{aligned}
\rho x_{tt}-\tau x_{zz}&=-\lambda
\left(\frac{\partial A_2}{\partial x}- \frac{\partial A_1}{\partial y}\right)y_t+\lambda\frac{\partial \phi}{\partial x}\,,
\\
\rho y_{tt}-\tau y_{zz}&=\lambda \left(\frac{\partial A_2}{\partial x}- \frac{\partial A_1}{\partial y}
\right)x_t+\lambda\frac{\partial \phi}{\partial y}\,.
\end{aligned}
\label{equno}
\eeq
The left-hand side is the string equation with two modes of vibration in the plane $XY$ and in the right-hand side we have an electromagnetic-like term.

Now, consider the contact phase space 
$\oplus^2T\mathbb{R}^2\times \mathbb{R}^2$, 
with coordinates 
$(x,y,x_t,x_z,y_t,y_z,s^t,s^z)$. 
We add a simple dissipation term in the preceding Lagrangian:
$$
L = L_o+\gamma\,s^t =
\frac12 \rho(x_t^2+y_t^2) - \frac12 \tau(x_z^2+y_z^2) - \lambda \left(\phi-A_1x_t-A_2y_t\right) + \gamma s^t.
$$
The induced $2$-contact structure is
$$
\eta^t = 
\d s^t - 
(\rho x_t+\lambda A_1)\,\d x - (\rho y_t+\lambda A_2)\,\d y 
\,;
\quad 
\eta^z = 
\d s^z + \tau x_z \,\d x + \tau y_z\,\d y 
\,.
$$
The $2$-contact Euler--Lagrange equations are
\beq
\begin{aligned}
\rho x_{tt} - \tau x_{zz}
& = 
-\lambda \left(
\frac{\partial A_2}{\partial x} - \frac{\partial A_1}{\partial y}
\right)y_t 
+ \lambda\frac{\partial \phi}{\partial x} 
+ \gamma\rho x_t + \gamma\lambda A_1\, ,
\\
\rho y_{tt} - \tau y_{zz}
& =
\lambda \left(
\frac{\partial A_2}{\partial x}- \frac{\partial A_1}{\partial y}
\right)x_t
+ \lambda\frac{\partial \phi}{\partial y}
+ \gamma\rho y_t + \gamma\lambda A_2 \, .
\end{aligned}
\label{eqdos}
\eeq
Comparing equations \eqref{equno} and \eqref{eqdos} 
we observe that the dissipation originates two new terms: 
a dissipation force proportional to the velocity, 
and an extra term proportional to $(A_1,A_2)$. 
This last term comes from the non-linearity of 
the $2$-contact Euler--Lagrange equations 
with respect to the Lagrangian.

This system has the Lagrangian $2$-contact symmetry 
$$
Y = 
\frac{\partial A_2}{\partial x}\frac{\partial}{\partial x}
+ \frac{\partial A_1}{\partial y}\frac{\partial}{\partial y}\,.
$$
The associated map ${\bf F}=(F^t,F^z)$ that satisfies the dissipation law for $2$-vector fields is
\begin{eqnarray*}
F^t&=&
-\inn(Y)\eta^t =
\rho x_t\frac{\partial A_2}{\partial x}
+ \lambda \frac{\partial A_2}{\partial x}A_1
+ \rho y_t\frac{\partial A_1}{\partial y}
+ \lambda\frac{\partial A_1}{\partial y}A_2\ ,
\\
F^z&=&
-\inn(Y)\eta^z =
-\tau x_z\frac{\partial A_2}{\partial x}
-\tau y_z\frac{\partial A_1}{\partial y}\,.
\end{eqnarray*}

\section{Conclusions and outlook}

In a previous paper \cite{GGMRR-2020}
we introduced the notion of $k$-contact structure 
to describe Hamiltonian (De Donder--Weyl) covariant field theories with dissipation, 
bringing together contact Hamiltonian mechanics and $k$-symplectic field theory.

In this paper, we have developed the Lagrangian counterpart of this theory,
basing on contact Lagrangian and $k$-contact Hamiltonian formalisms. 
Thus, we have obtained and analyzed the Lagrangian (Euler--Lagrange) equations
of dissipative field theories. 
It should be pointed out that the regularity of the Lagrangian is required to obtain a $k$-contact structure.

We have also studied several kinds of symmetries: 
dynamical symmetries (those preserving solutions), 
$k$-contact symmetries (those preserving the $k$-contact structure and the energy)
and symmetries of the Lagragian function.
We have showed how to associate a dissipation law
with any dynamical symmetry. 

As interesting examples,
we have constructed 
Lagrangian functions for certain classes of elliptic
and hyperbolic partial differential equations;
in particular,
we have analyzed the example of the
damped vibrating membrane.
Another example has shown
the difference between the equations of 
the $k$-contact formulation of a linear dissipation 
and the equations arising from magnetic-like terms
appearing in some Lagrangian functions of field theories.

Among future lines of research,
the case of singular Lagrangians seems especially interesting,
though it would require to define the notions of \emph{$k$-precontact structure}
and \emph{$k$-precontact Hamiltonian system},
and to develop a constraint analysis to check the consistency of field equations.

\subsection*{Acknowledgments}

We acknowledge the financial support from the 
Spanish Ministerio de Ciencia, Innovaci\'on y Universidades project
PGC2018-098265-B-C33
and the Secretary of University and Research of the Ministry of Business and Knowledge of
the Catalan Government project
2017--SGR--932.

\bibliographystyle{abbrv}
\addcontentsline{toc}{section}{References}
\itemsep 0pt plus 1pt
\small

\end{document}